\DeclareSymbolFontAlphabet{\amsmathbb}{AMSb}%
\newcommand{\F}{{\amsmathbb{F}}}
\newcommand{\E}{{\amsmathbb{E}}}
\newcommand{\pr}{{\amsmathbb{P}}}
\newcommand*\dd{\mathop{}\!\mathrm{d}}
\newcommand{\dist}{{\:\stackrel{d}{\to}\:}}
\newcommand{\prob}{{\:\stackrel{\amsmathbb{P}}{\to}\:}}
\theoremstyle{plain}
\newtheorem{theorem}{Theorem}
\newtheorem{lemma}{Lemma}
\newtheorem{definition}{Definition}
\newtheorem{corollary}{Corollary}
\newtheorem{condition}{Condition}
\begin{document}
\title{Censored and extreme losses: functional convergence and applications to tail goodness-of-fit}
\author{Martin Bladt and Christoffer Øhlenschlæger\\
University of Copenhagen}
\date{\today}
\maketitle


\begin{abstract}
This paper establishes the functional convergence of the Extreme Nelson--Aalen and Extreme Kaplan--Meier estimators, which are designed to capture the heavy-tailed behaviour of censored losses. The resulting limit representations can be used to obtain the distributions of pathwise functionals with respect to the so-called tail process. For instance, we may recover the convergence of a censored Hill estimator, and we further investigate two goodness-of-fit statistics for the tail of the loss distribution. Using the the latter limit theorems, we propose two rules for selecting a suitable number of order statistics, both based on test statistics derived from the functional convergence results. The effectiveness of these selection rules is investigated through simulations and an application to a real dataset comprised of French motor insurance claim sizes.
 
\end{abstract}

\section{Introduction}

Investigating extreme events that involve right censoring is essential in insurance statistics, as ignoring these censoring effects can introduce significant bias. When extreme values are not fully observed due to censoring, treating them as complete data points can lead to a severe underestimation of their potential impacts. This issue commonly arises in the evaluation of large insurance claims, policyholder lifespans, natural disaster damages, and other risk factors, see for instance \cite{matthys2004estimating, reynkens2017modelling, goegebeur2024conditional}. Thus, understanding and analyzing censored extremes is crucial for managing these insurance risks effectively, even more so in the wake of the rising frequency and severity of catastrophic events globally.

This paper aims to establish the mathematical foundations for statistical inference on data exhibiting heavy tails and censoring, approached from a more general pathwise perspective. Thus, we extend the recent methodology of \cite{EKM} through an entirely new mathematical framework using empirical processes. The results in the latter reference have favorable finite-sample behavior, making their pathwise analysis all the more crucial. However, their results rely on methods from \cite{stute}, which work for integrals of the EKM estimator. The resulting theorems in the sequel impose conditions of a similar nature to those in \cite{einmahl2008statistics}, and are proved using techniques from empirical processes theory. By considering the pathwise behavior of EKM estimators and their distributional representations, we are able to recover established results for integrals of EKM estimators but also derive limit results for arbitrary functionals. This approach broadens the applicability of EKM methods and aims to establish them as the standard way of dealing with data with medium to low censoring proportions.

We provide a functional central limit theorem for extreme versions of the Nelson-Aalen and Kaplan-Meier estimators, which can subsequently be used to establish limit theorems for various functionals. As an illustrative example, we focus on goodness-of-fit testing, where extreme versions of the Kolmogorov--Smirnov and Cram\'er--von Mises statistics are employed to assess the quality of fit in the tail of the distribution. Intuitively, a better fit in the tail region leads to a more reliable estimation for any procedure that utilizes that exact number of upper-order statistics. Consequently, we specialize out approach even further to provide a finite-sample method for determining the optimal number of order statistics to consider in the estimation procedure, based on the limit theorems of the goodness-of-fit statistics. 

Selection rules have been studied extensively in the non-censored case, where several rules have been proposed. One review of selection rules can be found in \cite{ExtremeSELEC}. To mention two examples, we have \cite{Beirlantselect}, where they suggest a rule based on minimizing the asymptotic mean squared error and \cite{Thresholdselec}, where one of their suggestions is a rule based on integrated square error between the theoretical density and the parametric estimated density. In the censored case, for a different estimator, the paper \cite{Autoselect} applies a selection rule based on the information matrix test in the \emph{Peaks-Over-Thresholds} method. In \cite{bladt2021trimmed}, also for a different estimator, they suggest a selection rule based on minizing the mean squared error of a trimmed Hill estimator, with applications to third-party liability insurance. The crucial aspect of this paper is that we propose a selection rule for the Extreme Kaplan-Meier estimator, $\amsmathbb{F}_{k,n}$, as proposed in \cite{EKM}. The latter paper shows simulations studies where integrals of the type  $\int \varphi \dd \amsmathbb{F}_{k,n}$ outperform traditional estimators. The integrals, which they coin EKMI, contain multiple interesting tail-estimators. This further highlights the relevance of proposing tailor-made selection rules for the Extreme Kaplan-Meier estimator.

To evaluate the effectiveness of the results of this paper, we conduct simulation studies that demonstrate the practical utility of our asymptotic representations in finite samples. In particular, we provide evidence that the theoretical representations of our estimators are reasonable even when applied to medium datasets. Additionally, we apply our approach to a real-world data set, specifically analyzing claims from a French motor insurance company. Here, settlement times for claims can span several years, resulting in a naturally right-censored dataset at any given evaluation date (open claims being right-censored).

The paper is organized as follows: Section \ref{sec:setting} introduces the notation and general setup. Section \ref{sec:Con} presents results on the pathwise consistency and normality of the Extreme Nelson-Aalen and Extreme Kaplan-Meier estimators. Section \ref{sec:Pathwise} investigates the finite sample behavior of these estimators. Section \ref{sec: Applied} applies the obtained pathwise asymptotics, demonstrating the consistency and asymptotic normality of the Hill estimator and proposing two rules for selecting the number of order statistics. These rules are then evaluated through simulation and data from insurance claims.

\section{Setting}
\label{sec:setting}

Let $\{X_i\}_{i=1}^n$ and  $\{Y_i\}_{i=1}^n$ be independent sequences of independent and identical distributed (i.i.d.) random variables with distribution functions $F_X$, $F_Y$, respectively. Here, the target distribution function if $F_X$. We assume we are in a right-censoring regime, such that we only observe $\{(Z_i,\delta_i)\}_{i=1}^n$, which are given by $Z_i=\min\{X_i,Y_i\}$ and $\delta_i=I(X_i\leq Y_i)$. The distribution function of $Z_i$ is denoted $F_Z$.

Throughout the paper, we consider the case where both $F_X$ and $F_Y$ are heavy-tailed, and more precisely, belong to the Frech\'et domain of attraction. In particular, the risks of interest have tails that exhibit power-law behavior. This regime has been recently studied in different settings, for instance, \cite{stupfler2016estimating,bladt2021trimmed,goegebeur2024conditional}. Thus, we have that
\begin{align*}
    1-F_i(x)
    = 
     L_i(x) \ x^{-\frac{1}{\gamma_i}},
\end{align*}
for $i\in \{X,Y\}$, where $\gamma_i$ is named the tail-index, and where $L_i$ is slowly varying at infinity, that is $\lim_{t\to\infty}L_i(tx)/L(t)=1$ for every $t>0$. This implies that the corresponding tail-counterparts, $F^t_X$ and $F^t_Y$, satisfy  
\begin{align*}
    1-F^t_i(x):=\frac{1-F(tx)}{1-F(t)}\to x^{-1/\gamma_i} \quad \text{as } t\to \infty.
\end{align*}
 A further consequence is that $Z_i$ likewise belongs to the Frech\'et domain of attraction with tail index $\gamma=\frac{\gamma_Y\gamma_X}{\gamma_Y+\gamma_X}$. A recurring expression in the sequel is the limiting tail distribution of $Z$, and hence to ease notation we denote it by $T(s)=s^{-1/\gamma}$. Finally, let $p=\frac{\gamma_Y}{\gamma_X+\gamma_Y}$, which is the asymptotic frequency of non-censoring in the tail.
 
Let $u_n=F_Z^{\leftarrow}(1-k/n)$ be a sequence of thresholds where $k=k_n$ is an intermediate sequence satisfying $k_n\to\infty$ and $k_n/n\to0$.  Large $k$ is necessary for having sufficient data to perform estimation, while small $k/n$ indicates the need to focus solely on the tail.

We require the following assumption, which states that the asymptotic censoring proportion adequately stabilizes at $p$, confer also with the assumptions of Theorem 1 in \cite{einmahl2008statistics}:  
\begin{condition}\label{cond1}
Let $p(s)=\pr(\delta_1=1|Z_1>s)$. The following convergence holds:
\begin{align*}
\sqrt{k}(p(u_n)-p)\to0.
\end{align*}
Further, for any interval $[a,b]\subset \amsmathbb{R}_+$
\begin{align*}
\limsup_{n\to\infty}\sqrt{k}\sup_{a\le s\le t<b}\Big|\frac{\pr(u_nt\ge Z_1>u_ns,\,\delta_1=1)}{1-F_Z(u_n)}-p\left(s^{-1/\gamma}-t^{-1/\gamma}\right)\Big|=0.
\end{align*}
\end{condition}
Notice that without the $\sqrt{k}$, the second part of the latter condition is satisfied automatically by regular variation in case $p=1$, that is when there is no censoring present. Here, the interplay between $k$ and $u_n$ is key.

In this paper, we aim to investigate the \emph{pathwise distribution} of the Extreme Kaplan-Meier (EKM) estimator, which was first proposed in \cite{EKM}. It is given by
\begin{align}
    \amsmathbb{F}_{k,n}(x)=1-\prod_{i=1}^k \left[ 1-\frac{\delta_{[n-i+1:k]}}{i}\right]^{I(Z_{n-i+1,n}/Z_{n-k,n}\leq x)}.
    \label{eq:EKM}
\end{align} 
where $Z_{1,n}\leq Z_{2,n}\leq \ldots, \leq Z_{n,n}$ are the order statistics of the censored observations and $\delta_{[i:n]}$ are the concomitants of $\{Z_{i,n}\}_{i=1}^n$. Intuitively speaking, $\amsmathbb{F}_{k,n}$ removes the censoring effects while still zooming into the tail. The pathwise properties of such an estimator then shed light on the quality of the convergence to the asymptotic Pareto tail of $F_X$. To proceed with the investigation of the EKM estimator, we use an equivalent, but different, expression for it which simplifies its analysis using empirical process techniques. First, we define the tail empirical distribution function.

\begin{definition}[Tail empirical distribution function with random levels]
Define the tail empirical distribution and subdistribution functions with random levels, for $s>0$, as
\begin{align*}
&\widehat T_n(s)=\frac{1}{k}\sum_{i=1}^nI(Z_i>Z_{n,n-k} s),\\ &\widehat T^1_n(s)=\frac{1}{k}\sum_{i=1}^nI(Z_i>Z_{n,n-k} s,\,\delta_i=1).
\end{align*}
Define also
\begin{align*}
 \widehat{\amsmathbb{T}}_n(s)=\sqrt{k}(\widehat{T}_n(s)-T(s)),\quad  \widehat{\amsmathbb{T}}^1_n(s)=\sqrt{k}(\widehat{T}^1_n(s)-pT(s)).
\end{align*}
\end{definition}
Now we are in position to define the Extreme Nelson-Aalen estimator and thereafter redefine the EKM estimator. The latter redefinition follows from a well-known property of the product integral (cf. \cite{gill1990survey}), for which we use the symbol $\prodi$. 
\begin{definition}[Extreme Nelson-Aalen and Kaplan-Meier estimators]
We define the extreme versions of the Nelson-Aalen and Kaplan-Meier estimators as follows:
\begin{align*}
\mathbb{\Lambda}_{k,n}(t)&=-\int_1^t\frac{1}{\widehat{\amsmathbb{T}}_n(s-)}\dd \widehat{\amsmathbb{T}}^1_n(s),\\
\amsmathbb{F}_{k,n}(t)&=1-\Prodi_1^t(1-\dd \mathbb{\Lambda}_n(s)).
\end{align*}
Define also $\Lambda^\circ(s)=\frac{1}{\gamma_X}\log(s)$ and $F^\circ(s)=1-s^{-1/\gamma_X}$.
\end{definition}
The idea now is to use the pathwise weak convergence of $\amsmathbb{F}_{k,n}$ to $F^\circ$ to assess the quality of asymptotic approximations of functionals of $\amsmathbb{F}_{k,n}$. For instance, we may retrieve straightforwardly an estimator for $\gamma_X=\int \log \dd F^\circ$ by considering $\int \log \dd \amsmathbb{F}_{k,n}$. However, the degree to which we can trust this approximation depends on how close the two integrator paths are, which in turn can be qualified in terms of their pathwise asymptotic Gaussian process representations.

\section{Convergence of tail processes}
\label{sec:Con}
To be able to show the pathwise convergence of EKM estimtor, we first require to do some groundwork, for which an approach similar to \cite{kulik2020heavy} is
the most direct route. Some parts of the proofs follow directly the latter reference and are therefore omitted, thus concentrating mainly on the differences required for our setting.
Hence, we begin by looking at the tail empirical distribution functions.

\begin{definition}[Tail empirical distribution functions]
Define the tail empirical distribution and sub-distribution functions for $ s>0$ as
\begin{align*}
\widetilde T_n(s)&=\frac{1}{k}\sum_{i=1}^nI(Z_i>u_n s),\\
\widetilde T^1_n(s)&=\frac{1}{k}\sum_{i=1}^nI(Z_i>u_n s,\,\delta_i=1).
\end{align*}
and its expectations
\begin{align*}
T_n(s)&=\E[\widetilde T_n(s)]=\frac{1-F_Z(u_n s)}{1-F_Z(u_n)},\\
T^1_n(s)&=\E[\widetilde T^1_n(s)]=\frac{\pr(Z>u_n s,\,\delta_1=1)}{1-F_Z(u_n)}.
\end{align*}
Define also $T(s)=s^{-1/\gamma}$ and $p=\frac{\gamma_Y}{\gamma_X+\gamma_Y}$.
\end{definition}
The following theorem shows weak consistency of the tail empirical functions.
\begin{theorem}[Consistency of the tail empirical function]
\label{thm:prob}
We have that
\begin{align*}
\widetilde T_n(s)\prob  T(s),\quad \widetilde T^1_n(s)\prob  pT(s),
\end{align*}
uniformly on bounded intervals. Moreover $Z_{n,n-k}/u_n\prob 1$.
\end{theorem}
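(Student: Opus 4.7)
The plan is to establish the three statements in three stages: first pointwise convergence of $\widetilde T_n(s)$ to $T(s)$, then an extension to uniform convergence via monotonicity, followed by a parallel treatment for $\widetilde T^1_n$ using Condition \ref{cond1}, and finally derive the order-statistic convergence from the first result through an inversion argument.

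For the pointwise statement on $\widetilde T_n$, I would split $\widetilde T_n(s)-T(s) = (\widetilde T_n(s)-T_n(s)) + (T_n(s)-T(s))$. The deterministic bias $T_n(s)-T(s)\to 0$ is a direct consequence of the regular variation of $1-F_Z$ combined with $1-F_Z(u_n)\sim k/n$, since
\begin{align*}
T_n(s)=\frac{1-F_Z(u_n s)}{1-F_Z(u_n)}\longrightarrow s^{-1/\gamma}=T(s).
\end{align*}
For the stochastic part, note that $k\widetilde T_n(s)$ is a sum of i.i.d.\ Bernoulli variables with mean $kT_n(s)$ and variance $kT_n(s)(1-T_n(s))$, so $\mathrm{Var}(\widetilde T_n(s))=T_n(s)(1-T_n(s))/k\to 0$ and Chebyshev's inequality gives $\widetilde T_n(s)-T_n(s)\prob 0$. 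To upgrade to uniform convergence on $[a,b]\subset(0,\infty)$, I would invoke the classical Polya-type argument: $\widetilde T_n$ is non-increasing, the limit $T$ is continuous and monotone, so pointwise convergence on a dense subset of $[a,b]$ plus sandwiching at a finite grid yields $\sup_{a\le s\le b}|\widetilde T_n(s)-T(s)|\prob 0$.

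For $\widetilde T^1_n$, the structure is identical. The variance bound is the same up to constants (the Bernoulli has mean at most $T_n(s)$). The only new ingredient is the deterministic convergence $T^1_n(s)\to pT(s)$, which is supplied, in fact uniformly and at a $\sqrt{k}$-rate, by the second part of Condition \ref{cond1}. This makes the bias control immediate on any $[a,b]\subset(0,\infty)$, and the Polya-type monotonicity trick again lifts the pointwise probabilistic convergence to uniform convergence.

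For $Z_{n-k,n}/u_n\prob 1$, the idea is to use the first part and the order-statistic identity $\{Z_{n-k,n}>t\}=\{\sum_{i=1}^n I(Z_i>t)\ge k\}$. For any $\varepsilon>0$,
\begin{align*}
\pr(Z_{n-k,n}>u_n(1+\varepsilon))=\pr\!\bigl(\widetilde T_n(1+\varepsilon)\ge 1\bigr),
\end{align*}
and since $\widetilde T_n(1+\varepsilon)\prob T(1+\varepsilon)=(1+\varepsilon)^{-1/\gamma}<1$, this probability tends to zero; a symmetric argument handles $Z_{n-k,n}<u_n(1-\varepsilon)$. The main obstacle in the whole proof is the uniform-in-$s$ step for $\widetilde T^1_n$, as one must verify that the uniform bias control granted by Condition \ref{cond1} meshes correctly with the Bernoulli variance bound across the interval; modulo this, everything else is bookkeeping along the lines of \cite{kulik2020heavy}.
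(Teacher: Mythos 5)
Your argument is correct and follows essentially the same route as the paper's: split each tail empirical function into a stochastic fluctuation plus a deterministic bias, control the fluctuation by a Chebyshev/LLN bound, control the bias by regular variation, and promote pointwise to uniform convergence using the monotonicity of the step functions together with continuity of the limit. The paper is terser — it declares the convergence of $\widetilde T_n$ and of $Z_{n-k,n}/u_n$ ``well-known'' and only writes out the second assertion for $\widetilde T^1_n$ — so your proof is more self-contained but not structurally different. Two small remarks: (i) for the bias $T^1_n(s)\to pT(s)$ you invoke Condition~\ref{cond1}, which supplies a $\sqrt{k}$-rate, but consistency only needs the qualitative facts $p(u_n)\to p$ and regular variation of $1-F_Z$, which is all the paper uses; citing Condition~\ref{cond1} is valid but unnecessarily strong here. (ii) The paper's appeal to ``Dini's Theorem'' for the uniform step is the same Polya-type monotone-function argument you describe, so the two are the same tool under different names. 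A cosmetic point: the order-statistic identity should be $\{Z_{n-k,n}>t\}=\{\sum_{i=1}^n I(Z_i>t)\ge k+1\}$ rather than $\ge k$, but this off-by-one does not affect the limiting probability.
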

\begin{proof}
The first and last assertions are well-known. We prove the second one. Fix $s>0$. Then
\begin{align*}
\widetilde T^1_n(s)-pT(s)=[\widetilde T^1_n(s)- T^1_n(s)]+[T^1_n(s)-pT(s)].
\end{align*}
The first term is $o_\pr(1)$ by the LLN, and for the second one we obtain
\begin{align*}
T^1_n(s)-pT(s)&=\frac{\pr(Z>u_n s,\,\delta_1=1)}{1-F_Z(u_n)}-ps^{-1/\gamma}\\
&=\pr(\delta_1=1|Z>u_n)\frac{1-F_Z(u_n s)}{1-F_Z(u_n)}-ps^{-1/\gamma},
\end{align*}
which by regular variation is also $o_\pr(1)$. By Dini’s Theorem, the convergence is uniform on bounded intervals.
\end{proof}
We now introduce the tail empirical processes and show their joint weak convergence.
\begin{definition}[Tail empirical process]
Define for a sequence $u_n$ of thresholds the tail empirical process of the distribution and sub-distribution functions for $s>0$ by
\begin{align*}
\widetilde {\amsmathbb{T}}_n(s)=\sqrt{k}(\widetilde T_n(s)-T_n(s)),\quad \widetilde {\amsmathbb{T}}^1_n(s)=\sqrt{k}(\widetilde T^1_n(s)-T^1_n(s)).
\end{align*}
Define also
\begin{align*}
{\amsmathbb{T}}_n(s)=\sqrt{k}(\widetilde T_n(s)-T(s)),\quad{\amsmathbb{T}}^1_n(s)=\sqrt{k}(\widetilde T^1_n(s)-pT(s)).
\end{align*}
\end{definition}
\begin{theorem}[Joint weak convergence of the tail empirical process]
On compacts, 
\begin{align*}
(\widetilde {\amsmathbb{T}}_n,\,\widetilde {\amsmathbb{T}}^1_n)\dist  ({\amsmathbb{T}},\,{\amsmathbb{T}}^1)
\end{align*}
where $(\amsmathbb{T},\,\amsmathbb{T}^1)=(\amsmathbb{W}\circ T,\,\sqrt{p}\cdot\amsmathbb{W}^1\circ T)$ and $(\amsmathbb{W},\,\amsmathbb{W}^1)$ is a bivariate Brownian motion with correlation $\rho=\sqrt{p}$. In particular, one may choose $$\amsmathbb{W}^1=\sqrt{p} \,\amsmathbb{W}+\sqrt{1-p}\,\amsmathbb{W}^\perp$$
with $\amsmathbb{W},\,\amsmathbb{W}^\perp$  independent standard Brownian motions.
\end{theorem}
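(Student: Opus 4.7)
My plan is to follow the classical recipe — finite-dimensional convergence plus tightness — adapting the tail empirical process arguments of \cite{kulik2020heavy} to accommodate the censoring mark $\delta$. Writing $f_s^{(0)}(z,\delta)=I(z>u_ns)$ and $f_s^{(1)}(z,\delta)=I(z>u_ns,\delta=1)$, the two processes are exactly the centered empirical sums $k^{-1/2}\sum_{i=1}^n(f_s^{(j)}(Z_i,\delta_i)-\E f_s^{(j)}(Z_i,\delta_i))$, and the extension from the uncensored case is essentially one of bookkeeping in the covariance structure.

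For the finite-dimensional convergence I would fix an arbitrary grid $s_1,\dots,s_m\in[a,b]$ and apply the multivariate Lindeberg CLT. The Lindeberg condition holds trivially since each increment is bounded by $1/\sqrt{k}$ and $\tfrac{n}{k}(1-F_Z(u_n))\to 1$. The covariance limits
\begin{align*}
k\,\mathrm{Cov}(\widetilde T_n(s_1),\widetilde T_n(s_2))&\to T(s_1\vee s_2),\\
k\,\mathrm{Cov}(\widetilde T_n^1(s_1),\widetilde T_n^1(s_2))&\to pT(s_1\vee s_2),\\
k\,\mathrm{Cov}(\widetilde T_n(s_1),\widetilde T_n^1(s_2))&\to pT(s_1\vee s_2),
\end{align*}
follow from monotonicity of the indicators, regular variation, and the second part of Condition \ref{cond1}; the product pieces in each covariance are of order $\tfrac{n}{k}(1-F_Z(u_n))^2\to 0$ and drop out. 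To match the candidate limit, with $\mathbb{W}^1=\sqrt{p}\,\mathbb{W}+\sqrt{1-p}\,\mathbb{W}^\perp$ one computes $\mathrm{Cov}(\mathbb{T}(s_1),\mathbb{T}^1(s_2))=\sqrt{p}\cdot\sqrt{p}\cdot T(s_1)\wedge T(s_2)=pT(s_1\vee s_2)$ — the last equality because $T$ is decreasing — so the target Gaussian process has precisely the computed covariance kernel.

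Tightness on compact subsets of $(0,\infty)$ is then obtained from standard empirical-process tools: both indicator classes are VC-subgraph with uniformly bounded envelopes scaling like $\sqrt{k/n}$, so the asymptotic equicontinuity arguments in Chapter 9 of \cite{kulik2020heavy} apply verbatim to each coordinate, and tightness of the pair follows coordinatewise.

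The main obstacle is not the CLT itself but the identification of the censored covariance at the \emph{correct} rate: one needs the uniform control at rate $\sqrt{k}$ of the censored subdistribution supplied by the second part of Condition \ref{cond1}, without which a residual contribution from the non-stationary conditional censoring probability $p(u_ns)$ would survive into the limit and the cross-covariance would fail to match the announced representation. The choice $\mathbb{W}^1=\sqrt{p}\,\mathbb{W}+\sqrt{1-p}\,\mathbb{W}^\perp$ is merely a convenient explicit realization of a bivariate Brownian motion with correlation $\sqrt{p}$ that reproduces the calculated covariance kernel, completing the proof.
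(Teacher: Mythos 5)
Your proof is correct and the finite-dimensional step matches the paper's (both compute the limiting covariance kernel via regular variation and a Lindeberg CLT, the paper packaging the Cram\'er--Wold step into a single linear combination $\xi_{n,i}(s,\alpha,\beta)$ while you tabulate the three covariance limits directly), but the tightness argument you propose is a genuinely different route. The paper invokes the \emph{Bracketing} CLT: it sets up the increment function $A_n$, uses the convergence of the censored subdistribution to show that partitions of cardinality $N(\varepsilon,a,b)\asymp\varepsilon^{-2}$ can be chosen uniformly in $n$, and then checks the bracketing-entropy integral. You instead observe that $\{I(z>u_ns)\}_s$ and $\{I(z>u_ns,\delta=1)\}_s$ are nested, hence VC-subgraph with index $2$, so the uniform-entropy bound is polynomial and, crucially, measure-free --- which makes it immediately applicable to the triangular array without having to argue that the bracketing partitions stabilize as $n$ grows. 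That is a clean alternative, and arguably slightly less delicate than the paper's explicit bracket construction. One conceptual imprecision in your closing paragraph, however: the theorem centers at the exact finite-$n$ expectations $T_n$ and $T^1_n$, so no bias term at scale $\sqrt{k}$ enters either the finite-dimensional limits or the equicontinuity modulus. The covariance identification only needs $p(u_n s)\to p$ and regular variation (the paper's proof explicitly says ``by regular variation'' at this point), not the $\sqrt{k}$-rate uniform control from the second part of Condition~\ref{cond1}; that rate becomes essential only later, in Theorem~\ref{thm:quan}, where the centering is replaced by the limits $T$ and $pT$.
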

\begin{proof}
Let us consider the finite-dimensional distributions by applying the Lindenberg CLT. Define
\begin{align*}
&\xi_{n,i}(s,\alpha,\beta)=\\
&k^{-1/2}\{\alpha I(Z_i>u_ns)+\beta I(Z_i>u_ns,\,\delta_i=1)-\alpha\pr(Z_i>u_ns)-\beta\pr(Z_i>u_ns,\,\delta_i=1)\}.
\end{align*}
so that $\sum_{i=1}^n \xi_{n,i}(\cdot,\alpha,\beta)=\alpha\widetilde {\amsmathbb{T}}_n+\beta\widetilde {\amsmathbb{T}}^1_n$. Observe that by regular variation:
\begin{align*}
&n\mbox{Cov}(\xi_{n,i}(s,\alpha_1,\beta_1),\xi_{n,i}(t,\alpha_2,\beta_2))\\
&=\frac{\alpha_1\alpha_2\pr(Z_1>u_n(s\vee t))+(\alpha_1\beta_2+\beta_1(\alpha_2+\beta_2))\pr(Z_1>u_n(s\vee t),\,\delta_1=1)}{1-F_Z(u_n)}+o(1)\\
&\to \big(\alpha_1\alpha_2+p(\alpha_1\beta_2+\beta_1(\alpha_2+\beta_2))\big)(s\vee t)^{-1/\gamma}.
\end{align*}
Next, by observing that $|\xi_{n,i}(s,\alpha,\beta)|\le(\alpha+\beta)(n(1-F(u_n)))^{-1/2}\to0$ deterministically, the following condition holds for any $\varepsilon>0$.
\begin{align*}
n\E[\xi_{n,i}(s,\alpha,\beta)^2I(|\xi_{n,i}(s,\alpha,\beta)|>\varepsilon)]\to0.
\end{align*}
This shows, by Lindenberg's CLT and by the Cram\'er-Wold device normality of the finite-dimensional distributions, jointly for both processes. The covariance structure can be calculated considering different choices of $\alpha_i,\beta_i$ in the above formulae.

To prove functional convergence, we make use of the Bracketing CLT. Recall that functional convergence of each individual process together with finite-dimensional distributional convergence implies joint functional convergence. Thus we only show functional convergence of $\widetilde {\amsmathbb{T}}^1_n$ (the other process being simpler).  In this case let $Z_{n,i}=k^{-1/2}I(Z_i>u_n\:\cdot\,,\, \delta_i=1)$. Clearly the argument for the first condition of the Bracketing CLT is similar to the one presented above (notice that the bound is uniform), and thus is easily satisfied. Next, let us find the bracketing number. 

Let $\varepsilon>0$, and $[a,b]$ be an interval with $a<b$. Define $$A_n(t)=n\E[	(Z_{n,i}(a)-Z_{n,i}(t))(Z_{n,i}(a)-Z_{n,i}(b))].$$ This is a right-continuous with left limits, non-decreasing function which by Condition \ref{cond1} satisfies $A_n(0)=0$, and as $n\to\infty$
\begin{align*}
A_n(b)=\frac{\pr(u_nb\ge Z_i>u_na,\,\delta_i=1)}{1-F(u_n)}\to p\left(a^{-1/\gamma}-b^{-1/\gamma}\right).
\end{align*}
Let $a=t_1<\dots<t_N=b$ be a partition such that $A_n(t_{i}-)-A_n(t_{i-1})<\varepsilon^2$ for every $i$. Then 
\begin{align*}
&n\cdot \E\left[\sup_{s,t\in [t_{i-i},t_i)}(Z_{n,i}(s)-Z_{n,i}(t))^2\right]\\
&\le n\E[(Z_{n,i}(t_i-)-Z_{n,i}(t_{i-1}))(Z_{n,i}(a)-Z_{n,i}(b))]<\varepsilon^2.
\end{align*}
By Condition \ref{cond1} we have, for $n\ge n_0$ for some large enough $n_0$, at most $N(\varepsilon,a,b):=\frac{2p(a^{-1/\gamma}-b^{-1/\gamma})}{\epsilon^2}$ terms in any partition, independently of $n$. Thus it only remains to verify the entropy condition,
which follows from the fact that $N_{[\:]}(\varepsilon,\mathcal{F},L_2(\pr))\le N(\varepsilon,a,b)$ and 
$\int_0^1 \sqrt{\log N(\varepsilon,a,b)}\dd\varepsilon<\infty$, so that clearly $$\int_0^{\delta_n}\sqrt{\log N_{[\:]}(\varepsilon,\mathcal{F},L_2(\pr))}\dd\varepsilon\to0\quad \mbox{for any}\quad \delta_n\downarrow0.$$ This concludes the proof.
\end{proof}

We now introduce the empirical quantile process, which is an important building block for showing weak convergence of the tail empirical process with random levels but is also interesting in its own right. 
\begin{definition}[Empirical quantile process]
The empirical quantile process is defined by
\begin{align*}
\widetilde Q_{n}(t)=\widetilde T_n^{\leftarrow}([kt]/k)=\frac{Z_{n,n-[kt]}}{u_n}, \quad s\in(0,n/k).
\end{align*}
Setting $Q(t)=t^{-\gamma}$ we define the associated empirical process as
\begin{align*}
\amsmathbb{Q}_n=\sqrt{k}(\widetilde Q_n-Q),\quad s\in(0,n/k).
\end{align*}
\end{definition}

Define $B_n(s_0)=\sup_{t\ge s_0}|T_n(t)-T(t)|$, which converges uniformly  to $0$ for each $s_0>0$.
\begin{theorem}[Joint convergence of tail empirical and quantile processes]
\label{thm:quan}
Let $k=k_n$ grow slow enough such that for some $s_0\in(0,1)$, $\sqrt{k}B_n(s_0)\to0$. Then
\begin{align*}
(\amsmathbb{T}_n,\,\amsmathbb{T}^1_n,\, \amsmathbb{Q}_n)\dist(\amsmathbb{T},\,\amsmathbb{T}^1,\, -Q'\cdot\amsmathbb{T}\circ Q)=(\amsmathbb{W}\circ T,\,\sqrt{p}\cdot\amsmathbb{W}^1\circ T,\:-Q'\amsmathbb{W})
\end{align*}
on $\ell^\infty([s_0,\infty))\times\ell^\infty([s_0,\infty))\times\ell^\infty((0,Q(s_0)])$. In particular,
\begin{align*}
(\amsmathbb{T}_n,\,\amsmathbb{T}^1_n,\, \sqrt{k}(Z_{n,n-k}/u_n-1))&\dist(\amsmathbb{T},\,\amsmathbb{T}^1,\, \gamma\cdot\amsmathbb{T}(1))\\
&=(\amsmathbb{W}\circ T,\,\sqrt{p}\cdot\amsmathbb{W}^1\circ T,\,\gamma\cdot\amsmathbb{W}(1)).
\end{align*}
\end{theorem}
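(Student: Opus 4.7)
The plan is to split the task into three steps: (a) pass from the centered processes $\widetilde{\amsmathbb{T}}_n,\,\widetilde{\amsmathbb{T}}^1_n$ to the limit-centered versions $\amsmathbb{T}_n,\,\amsmathbb{T}^1_n$; (b) invoke the functional delta method for the inverse map to produce $\amsmathbb{Q}_n$; (c) evaluate at $t=1$ to obtain the second display. Throughout we argue on $\ell^\infty([s_0,\infty))$ for the tail pieces and on $\ell^\infty((0,Q(s_0)])$ for the quantile piece.

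For (a) I write
\begin{align*}
\amsmathbb{T}_n(s)-\widetilde{\amsmathbb{T}}_n(s)=\sqrt{k}\,(T_n(s)-T(s)),\qquad \amsmathbb{T}^1_n(s)-\widetilde{\amsmathbb{T}}^1_n(s)=\sqrt{k}\,(T^1_n(s)-pT(s)).
\end{align*}
The first deterministic remainder vanishes uniformly on $[s_0,\infty)$ by the hypothesis $\sqrt{k}B_n(s_0)\to0$. For the second, the identity $T^1_n(s)=p(u_n s)\,T_n(s)$ combined with Condition \ref{cond1} (first part gives $\sqrt{k}(p(u_n)-p)\to0$, second part gives uniform control of $T^1_n(s)-pT_n(s)$) shows that it vanishes uniformly as well. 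Hence $(\amsmathbb{T}_n,\amsmathbb{T}^1_n)$ and $(\widetilde{\amsmathbb{T}}_n,\widetilde{\amsmathbb{T}}^1_n)$ have the same weak limit, which by the previous theorem is $(\amsmathbb{W}\circ T,\sqrt{p}\,\amsmathbb{W}^1\circ T)$.

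For (b), note that $\widetilde Q_n$ is a left-continuous inverse of the monotone $\widetilde T_n$ and that $T$ is strictly decreasing and continuously differentiable on $[s_0,\infty)$ with $T'(Q(t))=-(1/\gamma)Q(t)^{-1/\gamma-1}\neq0$; by the inverse function theorem $Q'(t)=1/T'(Q(t))=-\gamma t^{-\gamma-1}$. The functional delta method for the generalized-inverse map (e.g.\ a Vervaat-type lemma adapted to decreasing functions, as in the strategy of \cite{kulik2020heavy}) then yields, jointly with part (a),
\begin{align*}
\amsmathbb{Q}_n(t)=\sqrt{k}\bigl(\widetilde Q_n(t)-Q(t)\bigr)\dist -\frac{\amsmathbb{T}\circ Q(t)}{T'(Q(t))}=-Q'(t)\,\amsmathbb{T}(Q(t))
\end{align*}
on $\ell^\infty((0,Q(s_0)])$. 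Joint convergence with $(\amsmathbb{T}_n,\amsmathbb{T}^1_n)$ is preserved because the inverse map is a continuous functional applied to the first coordinate and the delta method transfers the joint weak limit along continuous mappings; using $\amsmathbb{T}=\amsmathbb{W}\circ T$ and $T\circ Q=\mathrm{id}$ gives $-Q'(t)\amsmathbb{W}(t)$ as the limit of $\amsmathbb{Q}_n$.

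For (c) I specialize $t=1$: $Q(1)=1$, $Q'(1)=-\gamma$, so $\sqrt{k}(Z_{n,n-k}/u_n-1)=\amsmathbb{Q}_n(1)\dist \gamma\amsmathbb{W}(1)=\gamma\,\amsmathbb{T}(1)$, and this joint convergence with $(\amsmathbb{T}_n,\amsmathbb{T}^1_n)$ drops out of the previous display. The main obstacle is the rigorous application of the inverse-map delta method on an unbounded half-line with a decreasing integrator: one must check that the restriction of $\widetilde T_n$ to $[s_0,\infty)$ lives in the correct Hadamard-differentiability domain of the generalized-inverse map at $T$, and that the inversion of the half-line $[s_0,\infty)$ corresponds to $(0,Q(s_0)]$; the hypothesis $\sqrt{k}B_n(s_0)\to0$ is exactly what is needed to make this step valid.
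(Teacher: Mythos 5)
Your proof takes essentially the same route as the paper's: show that the deterministic remainders $\sqrt{k}(T_n-T)$ and $\sqrt{k}(T^1_n-pT)$ vanish uniformly on $[s_0,\infty)$ via the growth hypothesis and Condition \ref{cond1}, so $(\amsmathbb{T}_n,\amsmathbb{T}^1_n)$ inherits the limit of $(\widetilde{\amsmathbb{T}}_n,\widetilde{\amsmathbb{T}}^1_n)$, and then apply Vervaat's lemma (the inverse-map delta method) to the first coordinate to deduce the quantile-process limit, specializing at $t=1$ for the final display. You merely spell out a few details the paper leaves implicit (the factorization $T^1_n(s)=p(u_n s)T_n(s)$ and the derivative computation $Q'(1)=-\gamma$), so the argument is the same in substance.
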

\begin{proof}
We may write, on $[s_0,\infty)$, by assumption and using Condition \ref{cond1}, $(\amsmathbb{T}_n,\,\amsmathbb{T}^1_n)=(\widetilde{\amsmathbb{T}}_n,\,\widetilde{\amsmathbb{T}}^1_n)+o(1)$, and the latter converges to $(\amsmathbb{T},\,\amsmathbb{T}^1)$ in $\ell^\infty([s_0,\infty)^2)$. Then the desired joint convergence follows from an application of Vervaat’s Lemma (the continuous mapping theorem for the special case of the joint identity and inverse mappings) applied to the first coordinate process. 
\end{proof}

We are now in a position to leverage the previous results to finally assert weak convergence of the tail empirical process with random levels. The random levels are of course needed in practice since the deterministic levels are theoretical quantities which are not available from data.
\begin{theorem}[Weak convergence of tail empirical process with random levels]
Let $k=k_n$ grow slow enough such that for some $s_0\in(0,1)$, $\sqrt{k}B_n(s_0)\to0$. Then
\begin{align*}
(\widehat{\amsmathbb{T}}_n,\, \widehat{\amsmathbb{T}}^1_n)&\dist(\amsmathbb{T}-T\cdot\amsmathbb{T}(1),\,p\cdot\amsmathbb{T}-p\,T\cdot\amsmathbb{T}(1)-\sqrt{p(1-p)}\amsmathbb{T}^\perp)\\
&=(\amsmathbb{B}\circ T,\,p\cdot \amsmathbb{B}\circ T-\sqrt{p(1-p)}\amsmathbb{W}^\perp\circ T)
\end{align*}
in $\ell^\infty([s_0,\infty)^2)$, where $\amsmathbb{B}$ is standard Brownian Bridge, independent of $\amsmathbb{W}^\perp$. In particular, the second coordinate process has covariance $p(s\wedge t)-p^2st$.
\end{theorem}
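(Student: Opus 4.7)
The plan is to reduce the random-level tail processes to the deterministic-level ones through the change of variables $\hat c_n=Z_{n,n-k}/u_n$. By the definition of the order statistics,
\begin{align*}
\widehat T_n(s)=\widetilde T_n(s\hat c_n),\qquad \widehat T^1_n(s)=\widetilde T^1_n(s\hat c_n),
\end{align*}
so centering at the limiting tails $T(s)$ and $pT(s)$ yields the decompositions
\begin{align*}
\widehat{\amsmathbb{T}}_n(s) &= \amsmathbb{T}_n(s\hat c_n)+\sqrt{k}\bigl(T(s\hat c_n)-T(s)\bigr),\\
\widehat{\amsmathbb{T}}^1_n(s) &= \amsmathbb{T}^1_n(s\hat c_n)+p\sqrt{k}\bigl(T(s\hat c_n)-T(s)\bigr).
\end{align*}
Here I have silently replaced $\widetilde{\amsmathbb{T}}^1_n$-type centerings by their $\amsmathbb{T}^1_n$ counterparts, which is legitimate on $[s_0,\infty)$ under Condition \ref{cond1} together with $\sqrt{k}B_n(s_0)\to0$.

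Next I would invoke Theorem \ref{thm:quan} to obtain joint weak convergence of the triple $(\amsmathbb{T}_n,\amsmathbb{T}^1_n,\sqrt{k}(\hat c_n-1))$ to $(\amsmathbb{T},\amsmathbb{T}^1,\gamma\amsmathbb{T}(1))$ on the appropriate product space. For the process terms, I use that the limiting paths are continuous and $\hat c_n\to 1$ in probability to conclude $\amsmathbb{T}_n(s\hat c_n)\dist \amsmathbb{T}(s)$ and $\amsmathbb{T}^1_n(s\hat c_n)\dist \amsmathbb{T}^1(s)$ jointly, uniformly on compacts in $[s_0,\infty)$, via the continuous mapping theorem applied to the evaluation-composition map. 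For the drift, a first-order Taylor expansion of the smooth function $T$ around $s$ gives
\begin{align*}
\sqrt{k}\bigl(T(s\hat c_n)-T(s)\bigr) = sT'(s)\sqrt{k}(\hat c_n-1)+o_\pr(1),
\end{align*}
uniformly in $s\in[s_0,M]$, with $sT'(s)=-T(s)/\gamma$; this term therefore converges jointly to $-T(s)\amsmathbb{T}(1)$.

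Combining the two contributions yields $\widehat{\amsmathbb{T}}_n\dist \amsmathbb{T}-T\cdot\amsmathbb{T}(1)$ and $\widehat{\amsmathbb{T}}^1_n\dist \amsmathbb{T}^1-pT\cdot\amsmathbb{T}(1)$. Substituting the representation $\amsmathbb{T}^1=p\amsmathbb{T}+\sqrt{p(1-p)}\,\amsmathbb{W}^\perp\circ T$ provided by the joint tail empirical process theorem gives the first line of the claim (with $\amsmathbb{T}^\perp$ identified, up to a symmetric sign, with $\amsmathbb{W}^\perp\circ T$). For the second form, recognize $\amsmathbb{T}(s)-T(s)\amsmathbb{T}(1)=\amsmathbb{W}(T(s))-T(s)\amsmathbb{W}(1)$ as the composition of a standard Brownian bridge $\amsmathbb{B}$ with $T$, independent of $\amsmathbb{W}^\perp$ by construction. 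The stated covariance $p(s\wedge t)-p^2 st$ of the second coordinate is then a short direct computation using bridge and Brownian covariances.

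The main technical subtlety is the continuous-mapping step showing $\amsmathbb{T}_n(s\hat c_n)\dist \amsmathbb{T}(s)$ (and similarly for $\amsmathbb{T}^1_n$): this requires joint tightness of $(\amsmathbb{T}_n,\hat c_n)$, which is delivered by Theorem \ref{thm:quan}, together with the asymptotic equicontinuity already implicit in functional convergence on $\ell^\infty([s_0,\infty))$, so that the evaluation-composition map is continuous at the limiting Gaussian process almost surely. The Taylor step is routine given the smoothness of $T$ on $[s_0,\infty)$ and the fact that $\hat c_n=1+O_\pr(k^{-1/2})$, and the reduction from $\widetilde{\amsmathbb{T}}^1_n$ to $\amsmathbb{T}^1_n$ is exactly the content of the assumption $\sqrt{k}B_n(s_0)\to 0$ combined with Condition \ref{cond1}.
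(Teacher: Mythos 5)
Your proposal is correct and follows essentially the same route as the paper's proof: both reduce $\widehat{\amsmathbb{T}}_n,\widehat{\amsmathbb{T}}^1_n$ to the deterministic-level processes evaluated at the random argument $s\hat c_n$, use the joint convergence of $(\amsmathbb{T}_n,\amsmathbb{T}^1_n,\sqrt{k}(\hat c_n-1))$ from Theorem \ref{thm:quan}, and obtain the drift $-T\cdot\amsmathbb{T}(1)$ from a first-order expansion of $T$ (the paper phrases this as a delta method for $x\mapsto(T(xs),pT(xs))$). The only cosmetic difference is that the paper uses a three-term decomposition, splitting off $T_n(\cdot\,r_n)-T(\cdot\,r_n)$ explicitly and bounding it by $B_n(s_0)$, whereas you absorb that bias into the centering replacement $\widetilde{\amsmathbb{T}}^1_n\mapsto\amsmathbb{T}^1_n$ at the outset; both are handled by the same hypothesis $\sqrt{k}B_n(s_0)\to0$.
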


\begin{proof}
Write $r_n=Z_{n,n-k}/u_n$, and 
\begin{align*}
&(\widehat{\amsmathbb{T}}_n,\,\widehat{\amsmathbb{T}}^1_n)\\
&=\sqrt{k}\big((\widetilde T_n(\cdot\, r_n),\,\widetilde T^1_n(\cdot\, r_n))-(T_n(\cdot\, r_n),\,T^1_n(\cdot\, r_n))\big)\\
&\quad+\sqrt{k}\big( (T_n(\cdot\, r_n),\,T^1_n(\cdot\, r_n))-(T(\cdot\, r_n),pT(\cdot\, r_n))\big)\\
&\quad+\sqrt{k}\big( (T(\cdot\, r_n),\,pT(\cdot\, r_n))-(T,\,pT)\big).
\end{align*}
Since $r_n\prob 1$, the first term has pathwise convergence to $(\amsmathbb{T},\,\sqrt{p}\, \amsmathbb{T}^1)$ by Slutsky's lemma. By the same convergence
\begin{align*}
\pr\big(||(T_n(s r_n),\,T^1_n(s r_n))-(T(s r_n),\,pT(s r_n))||\le B_n(s_0)\big)\to1
\end{align*} 
for any $s\ge s_0$, so by assumption the second term becomes pathwise $o_{\pr}(1).$ The delta method applied to $x\mapsto (T(xs),\,pT(xs))$ yields weak pathwise convergence of the last term (jointly with the first one), to \begin{align*}
\big(\frac{-1}{\gamma}s^{-1/\gamma}\gamma \amsmathbb{T}(1),\,\frac{-p}{\gamma}s^{-1/\gamma}\gamma \amsmathbb{T}^1(1)\big)=-T(s)(\amsmathbb{T}(1),\,p\,\amsmathbb{T}(1)).
\end{align*}
Collecting the terms yields the conclusion of the theorem.
\end{proof}

\section{Pathwise convergence of ENA and EKM estimators}
\label{sec:Pathwise}
This section provides pathwise consistency and normality of the ENA and EKM estimators. We refer the reader to \cite{vdv2000} for the main weak convergence arguments required in the proofs.  We also consider their finite-sample behavior through Monte Carlo simulation, which empirically verifies the quality of the asymptotic expressions.

\subsection{Consistency and weak convergence}

\begin{theorem}[Consistency of ENA and EKM]
    \label{thm:consistency}
    The following convergence holds uniformly on compacts:
    \begin{align*}
    &\mathbb{\Lambda}_{k,n}\prob\Lambda^\circ\\
        &\amsmathbb{F}_{k,n}\prob F^\circ.
    \end{align*}
\end{theorem}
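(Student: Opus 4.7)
The plan is to apply the continuous mapping theorem twice: first to pass from the random-level tail empirical functions $(\widehat T_n,\widehat T^1_n)$ to $\mathbb{\Lambda}_{k,n}$ via the Stieltjes-integral functional, and then to pass from $\mathbb{\Lambda}_{k,n}$ to $\amsmathbb{F}_{k,n}$ via the product-integral functional. The hardest step will be the Stieltjes passage, since the empirical integrator $\widehat T^1_n$ has atoms while the limit $pT$ is absolutely continuous.

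First I would upgrade the pointwise consistency of $(\widetilde T_n,\widetilde T^1_n)$ from Theorem \ref{thm:prob} to uniform consistency of the random-level versions. Writing $\widehat T_n(s)=\widetilde T_n(s\,r_n)$ with $r_n:=Z_{n,n-k}/u_n\prob 1$, and using that $T(s)=s^{-1/\gamma}$ is uniformly continuous on any $[a,b]\subset(0,\infty)$, a Slutsky-type argument gives $\widehat T_n\prob T$ and $\widehat T^1_n\prob pT$ uniformly on compacts in $(0,\infty)$.

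Next, for $\mathbb{\Lambda}_{k,n}$, fix $t>1$ and work on $[1,t]$. Since $T(s)\ge T(t)>0$ uniformly on $[1,t]$, with probability tending to one $\widehat T_n(s-)\ge T(t)/2$ for every $s\in[1,t]$, so the integrand $1/\widehat T_n(s-)$ is bounded. A standard continuity argument for Stieltjes integrals --- implemented via integration by parts to transfer the differential onto the absolutely continuous limit, or by a direct $\varepsilon$--$\delta$ estimate exploiting the uniform lower bound --- then yields
\[
\mathbb{\Lambda}_{k,n}(t)\prob -\int_1^t\frac{\dd(pT(s))}{T(s)} \;=\; -p\bigl[\log T(s)\bigr]_1^t \;=\; \frac{p}{\gamma}\log t.
\]
Using $p=\gamma_Y/(\gamma_X+\gamma_Y)$ and $\gamma=\gamma_X\gamma_Y/(\gamma_X+\gamma_Y)$ we obtain $p/\gamma=1/\gamma_X$, so the limit is $\Lambda^\circ(t)$. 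Uniformity on compacts follows by the usual Polya-type monotonicity argument, since both $\mathbb{\Lambda}_{k,n}$ and $\Lambda^\circ$ are non-decreasing and $\Lambda^\circ$ is continuous.

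Finally, the product-integral functional $\Lambda\mapsto 1-\Prodi_1^{\,\cdot}(1-\dd\Lambda(s))$ is continuous on non-decreasing cadlag functions in the topology of uniform convergence on compacts, cf.\ \cite{gill1990survey}. Applied to the previous step, and using that $\Lambda^\circ$ is continuous so that $\Prodi_1^t(1-\dd\Lambda^\circ(s))=\exp(-\Lambda^\circ(t))=t^{-1/\gamma_X}$, one obtains $\amsmathbb{F}_{k,n}(t)\prob 1-t^{-1/\gamma_X}=F^\circ(t)$ uniformly on compacts. The technical heart throughout remains the Stieltjes step: the uniform lower bound on $\widehat T_n$ is what prevents atoms of the empirical integrator from inducing anomalous behavior near the right endpoint, where $T$ is smallest.
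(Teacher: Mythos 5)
Your proof is correct and essentially the same in spirit as the paper's (very terse) proof: consistency of the random-level tail empirical functions, continuity of the Stieltjes and product-integral functionals, and a monotonicity/continuity argument to upgrade pointwise to uniform convergence. You provide substantially more detail than the paper, which compresses all of this into two sentences invoking ``random censoring,'' ``integration by parts,'' and ``Dini's theorem.''

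Two small points of comparison are worth noting. First, you correctly read the definition of $\mathbb{\Lambda}_{k,n}$ as built from the uncentered $\widehat T_n,\widehat T^1_n$ (the paper's displayed definition writes $\widehat{\amsmathbb{T}}_n,\widehat{\amsmathbb{T}}^1_n$, which are $\sqrt{k}$-centered processes --- an evident typo that your formulation implicitly repairs). Second, for the passage from $\mathbb{\Lambda}_{k,n}$ to $\amsmathbb{F}_{k,n}$ the paper invokes integration by parts (the Duhamel-type identity for product integrals), whereas you invoke continuity of the product-integral map directly from \cite{gill1990survey}; these are interchangeable. Likewise, your appeal to the P\'olya-type argument (monotone functions converging pointwise to a continuous limit converge uniformly) is actually the cleaner statement of what the paper labels ``Dini's theorem.'' The one step you flag as ``the technical heart'' --- the uniform lower bound on $\widehat T_n$ over $[1,t]$, which keeps the integrand $1/\widehat T_n(s-)$ bounded and the Stieltjes passage honest --- is indeed the substantive ingredient that the paper's one-line justification glosses over, and your treatment of it is correct.
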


\begin{proof}
    It is straightforward to see that by random censoring (and recalling that $\gamma=\gamma_X\gamma_Y/(\gamma_X+\gamma_Y)$ and $p=\gamma/\gamma_X$), pointwise, $\mathbb{\Lambda}_n(t)\prob\Lambda^\circ(t)$. The convergence is easily extended by integration by parts to $\amsmathbb{F}_n\prob F^\circ(s)$, and uniformly on compacts by Dini's theorem.
\end{proof}

For establishing weak convergence, we again need to ensure that $k$ does not grow too quickly. More precisely, we assume that the growth depends on $B_n(s_0):=\sup_{t\ge s_0}|T_n(t)-T(t)|$, where recall that $T_n(s)=\frac{1-F_Z(u_n s)}{1-F_Z(u_n)}$. This is akin to Condition \ref{cond1}.
\begin{theorem}[Weak convergence of the ENA and EKM estimators]
\label{thm:normality}
Let $k=k_n$ be such that for some $s_0\in(0,1)$, the bias term $\sqrt{k}B_n(s_0)\to0$. Then, on bounded intervals, 
\begin{align*}
\sqrt{k}(\mathbb{\Lambda}_{k,n}-\Lambda^\circ)&\dist \frac{p\,\amsmathbb{B}\circ T+\sqrt{p(1-p)}\,\amsmathbb{B}^\perp\circ T}{T}-\sqrt{p(1-p)}\int_{T(\cdot)}^1 \frac{\,\amsmathbb{W}^\perp(u)}{u^2}\dd u\\
&\quad =:\amsmathbb{Z}\circ T\\
\sqrt{k}(\amsmathbb{F}_{k,n}-F^\circ)&\dist  (1-F^\circ)\, \amsmathbb{Z}\circ T,
\end{align*}
where $\amsmathbb{W},\,\amsmathbb{W}^\perp$ are standard brownian motions, and where $\amsmathbb{B},\,\amsmathbb{B}^\perp$ are brownian bridges constructed from them. In particular, $\amsmathbb{Z}$ is a Gaussian martingale.
\end{theorem}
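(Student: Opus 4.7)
The plan is to derive both limits by pushing the joint tail-empirical-process convergence through two Hadamard-differentiable functionals: the Nelson--Aalen map $(f,g)\mapsto -\int_1^\cdot dg/f$ for the ENA, and the product-integral map $\Lambda\mapsto\Prodi_1^\cdot(1-d\Lambda)$ for the EKM. In particular, once the limit of $\sqrt{k}(\mathbb{\Lambda}_{k,n}-\Lambda^\circ)$ has been identified, the second assertion follows at once by the functional delta method applied to the product integral, whose derivative at the continuous $\Lambda^\circ$ in direction $h$ equals $-(1-F^\circ)\,h$ via the Duhamel identity. Combined with $\amsmathbb{F}_{k,n}=1-\Prodi_1^\cdot(1-d\mathbb{\Lambda}_{k,n})$ this yields the multiplicative factor $(1-F^\circ)$.

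For the ENA I would start with the add-and-subtract decomposition
\begin{align*}
\sqrt{k}\bigl(\mathbb{\Lambda}_{k,n}(t)-\Lambda^\circ(t)\bigr)
= -\int_1^t \frac{d\widehat{\amsmathbb{T}}_n^1(s)}{\widehat{T}_n(s-)}
+ \int_1^t \frac{\widehat{\amsmathbb{T}}_n(s-)}{\widehat{T}_n(s-)\,T(s)}\,d(pT(s)).
\end{align*}
Uniform consistency $\widehat{T}_n\prob T$ from Theorem~\ref{thm:prob} together with Slutsky's lemma replaces the random denominators by $T(s)$ up to $o_{\pr}(1)$ on any compact interval $[1,t_0]$. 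Since the limit processes have paths of unbounded variation, I would avoid interpreting stochastic integrals pathwise by applying integration by parts to the first integral, so that the whole expression becomes a continuous functional of $(\widehat{\amsmathbb{T}}_n,\widehat{\amsmathbb{T}}_n^1)$ in the uniform topology on $\ell^\infty([1,t_0])$ together with integrals against the deterministic measure $dT/T^2$. The continuous mapping theorem, coupled with the weak convergence of the pair with random levels from the previous section, then delivers the weak limit. A final change of variable $u=T(s)$ and the identity $\amsmathbb{W}^\perp(u)-u\amsmathbb{W}^\perp(1)=\amsmathbb{B}^\perp(u)$ rearrange the output into the Brownian-bridge plus Wiener-integral representation $\amsmathbb{Z}\circ T$ announced in the theorem.

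The Gaussian martingale claim reduces to verifying that $\mathrm{Cov}(\amsmathbb{Z}(u),\amsmathbb{Z}(v)-\amsmathbb{Z}(u))=0$ for $u\le v$, which is a direct calculation after expanding $\amsmathbb{Z}$ as a linear combination of the independent Brownian motions $\amsmathbb{W},\amsmathbb{W}^\perp$ and using the covariance structure of the bridges constructed from them. The main technical obstacle I anticipate is uniform control of the denominator $\widehat{T}_n(s-)$ as $s$ approaches the right end of the compact, where $\widehat{T}_n$ is smallest and its ratio to $T$ most sensitive; this requires the rate supplied by Theorem~\ref{thm:quan} (tightness of $\sqrt{k}(Z_{n,n-k}/u_n-1)$) together with a uniform consistency of $\widehat{T}_n$ on $[s_0,\infty)$. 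Once those estimates are in place, the remainder is bookkeeping: chaining integration by parts with continuous mapping and the product-integral delta method as presented in \cite{vdv2000,gill1990survey}.
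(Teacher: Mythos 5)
Your proposal is correct and essentially mirrors the paper's proof: both pass the weak convergence of $(\widehat{\amsmathbb{T}}_n,\widehat{\amsmathbb{T}}_n^1)$ from the preceding section through the Nelson--Aalen functional and then the product-integral (Duhamel) functional via the functional delta method, yielding the $(1-F^\circ)$ factor for the EKM. Your explicit add-and-subtract linearization of $\sqrt{k}(\mathbb{\Lambda}_{k,n}-\Lambda^\circ)$ is exactly the Hadamard-derivative expansion the paper states abstractly via $\phi_1'$ and $\phi_2'$, and the final rearrangement into the Brownian-bridge plus Wiener-integral form of $\amsmathbb{Z}\circ T$ proceeds identically.
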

It can be noted that the theorem collapses to the standard case if there is no censoring, i.e. if $p=1$.

\begin{proof}
We apply the functional delta theorem twice. To this end define the maps
\begin{align*}
\phi_1(f_1,f_2)&=\int_1^\cdot\frac{1}{f_1(s-)}\dd f_2(s)\\
\phi_2(f)&=\Prodi_1^\cdot(1-\dd f(s)),
\end{align*}
with their corresponding Hadamard derivatives given by
\begin{align*}
\phi'_1(f_1,f_2)(\alpha_1,\alpha_2)&=\int_1^\cdot\frac{1}{f_2(s)}\dd \alpha_1(s)-\int_0^\cdot \frac{\alpha_2(s)}{f_2(s)^2}\dd f_1(s)\\
\phi'_2(f)(\alpha)&=\int_1^\cdot \Prodi_{(1,s]}(1-\dd f(s)) \Prodi_{(s,\cdot]}(1-\dd f(s)) \dd \alpha(s).
\end{align*}
By a first application of the functional delta method we obtain
\begin{align*}
&\sqrt{k}(\mathbb{\Lambda}_n-\Lambda^\circ)\dist -\int_1^\cdot \frac{1}{T(s)}\dd (p\,\amsmathbb{B}\circ T-\sqrt{p(1-p)}\,\amsmathbb{W}^\perp \circ T)(s)\\
&\quad+\int_1^t\frac{(\amsmathbb{B}\circ T)(s)}{T(s)^2}\dd(pT)(s)\\
&=-p\left[\frac{\amsmathbb{B}\circ T}{T}-\amsmathbb{B}(1)\right]+p\int_1^\cdot (\amsmathbb{B}\circ T)(s)\dd (\frac{1}{T})(s)\\
&\quad+\sqrt{p(1-p)}\left[\frac{\amsmathbb{W}^\perp\circ T}{T}-\amsmathbb{W}^\perp(1)\right]-\sqrt{p(1-p)}\int_1^\cdot (\amsmathbb{W}^\perp\circ T)(s)\dd (\frac{1}{T})(s)\\
&\quad+\int_1^\cdot\frac{(\amsmathbb{B}\circ T)(s)}{T(s)^2}\dd(pT)(s)\\
&=-p\left[\frac{\amsmathbb{B}\circ T}{T}-\amsmathbb{B}(1)\right]+\sqrt{p(1-p)}\left[\frac{\amsmathbb{W}^\perp\circ T}{T}-\amsmathbb{W}^\perp(1)\right]\\
&\quad+\sqrt{p(1-p)}\int_1^{T(\cdot)} \frac{\amsmathbb{W}^\perp(u)}{u^2}\dd u.
\end{align*}
A second application now yields
\begin{align*}
\sqrt{k}(\amsmathbb{F}_{k,n}-F^\circ)&\dist -\int_1^\cdot \Prodi_{(1,s]}(1-\dd \Lambda^\circ(s)) \Prodi_{(s,\cdot]}(1-\dd \Lambda^\circ(s)) \dd (\amsmathbb{Z}\circ T)(s)\\
&\quad=-\int_1^\cdot (1-F^\circ(s-)) \frac{1-F^\circ(\cdot)}{1-F^\circ(s)} \dd (\amsmathbb{Z}\circ T)(s)\\
&\quad=-\int_1^\cdot (1-F^\circ(\cdot)) \dd (\amsmathbb{Z}\circ T)(s). 
\end{align*}
\end{proof}

\begin{lemma}[Computation of the covariance of $\amsmathbb{Z}$]\label{lemma:computation}
We have that 
\begin{align*}
\mbox{Cov}(\amsmathbb{Z}(s),\amsmathbb{Z}(t))=p(s^{-1}\wedge t^{-1}-1).
\end{align*}
\end{lemma}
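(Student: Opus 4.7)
The plan is to decompose $\amsmathbb{Z}$ into the two Gaussian pieces driven respectively by the independent Brownian motions $\amsmathbb{W}$ and $\amsmathbb{W}^\perp$, compute the covariance of each separately, and add. Writing
\[
\amsmathbb{Z}(u)=\underbrace{\frac{p\,\amsmathbb{B}(u)}{u}}_{=:\,\amsmathbb{Z}_1(u)}+\underbrace{\sqrt{p(1-p)}\left[\frac{\amsmathbb{B}^\perp(u)}{u}-\int_u^1\frac{\amsmathbb{W}^\perp(v)}{v^2}\dd v\right]}_{=:\,\amsmathbb{Z}_2(u)},
\]
where $\amsmathbb{B},\amsmathbb{B}^\perp$ are the bridges derived from $\amsmathbb{W},\amsmathbb{W}^\perp$ as in the theorem, makes $\amsmathbb{Z}_1\independent \amsmathbb{Z}_2$, so only the two covariance contributions need to be summed.

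For $\amsmathbb{Z}_1$ the bridge covariance $\mbox{Cov}(\amsmathbb{B}(s),\amsmathbb{B}(t))=s\wedge t-st$ gives immediately
\[
\mbox{Cov}(\amsmathbb{Z}_1(s),\amsmathbb{Z}_1(t))=\frac{p^2}{st}(s\wedge t-st)=p^2\bigl(s^{-1}\wedge t^{-1}-1\bigr).
\]
The crux is $\amsmathbb{Z}_2$. Using $\amsmathbb{B}^\perp(u)=\amsmathbb{W}^\perp(u)-u\amsmathbb{W}^\perp(1)$ together with the Itô integration-by-parts identity
\[
\int_u^1 \frac{\amsmathbb{W}^\perp(v)}{v^2}\dd v=\frac{\amsmathbb{W}^\perp(u)}{u}-\amsmathbb{W}^\perp(1)+\int_u^1\frac{1}{v}\dd \amsmathbb{W}^\perp(v),
\]
the non-stochastic terms cancel and the piece collapses to
\[
\amsmathbb{Z}_2(u)=-\sqrt{p(1-p)}\int_u^1\frac{1}{v}\dd \amsmathbb{W}^\perp(v),
\]
a pure Itô integral. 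The Itô isometry then gives
\[
\mbox{Cov}(\amsmathbb{Z}_2(s),\amsmathbb{Z}_2(t))=p(1-p)\int_{s\vee t}^1\frac{\dd v}{v^2}=p(1-p)\bigl(s^{-1}\wedge t^{-1}-1\bigr).
\]

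Adding the two independent contributions yields
\[
\mbox{Cov}(\amsmathbb{Z}(s),\amsmathbb{Z}(t))=\bigl[p^2+p(1-p)\bigr]\bigl(s^{-1}\wedge t^{-1}-1\bigr)=p\bigl(s^{-1}\wedge t^{-1}-1\bigr),
\]
as claimed. The one step requiring thought is spotting the integration-by-parts identity that turns $\amsmathbb{Z}_2$ into a single Itô integral; without this simplification one is left with a double integral of $\amsmathbb{W}^\perp$-covariances, still tractable but considerably messier. A useful a-posteriori sanity check is that the final covariance depends on $s,t$ only through $s\vee t$, which is precisely the signature of a Gaussian martingale, consistent with the final claim of Theorem \ref{thm:normality}.
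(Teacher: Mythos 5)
Your proof is correct, and it takes a genuinely different and cleaner route than the paper's. The paper computes the covariance by brute force: it splits $\amsmathbb{Z}$ into the same independent pieces ($p\,\amsmathbb{B}/T$ versus the $\amsmathbb{W}^\perp$-driven remainder), obtains the two obvious bridge contributions $p^2(\cdot)$ and $p(1-p)(\cdot)$, and then has to evaluate three further double/single integrals $I_1,I_2,I_3$ arising from the Riemann integral $\int_u^1\amsmathbb{W}^\perp(v)v^{-2}\dd v$ and its cross-covariances with $\amsmathbb{B}^\perp(s)/s$ and $\amsmathbb{B}^\perp(t)/t$, after which a non-obvious cancellation produces the final answer. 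You instead invoke the integration-by-parts identity
\[
\int_u^1\frac{\amsmathbb{W}^\perp(v)}{v^2}\dd v=\frac{\amsmathbb{W}^\perp(u)}{u}-\amsmathbb{W}^\perp(1)+\int_u^1\frac{1}{v}\dd\amsmathbb{W}^\perp(v),
\]
which collapses the whole $\amsmathbb{W}^\perp$-driven block to the single Wiener integral $-\sqrt{p(1-p)}\int_u^1 v^{-1}\dd\amsmathbb{W}^\perp(v)$, so the It\^o isometry gives the covariance in one line with no miraculous cancellations. Beyond the computational savings, this form makes the Gaussian-martingale structure of $\amsmathbb{Z}$ manifest from the outset (the covariance depends on $(s,t)$ only through $s\vee t$ because the integrand is deterministic), which the paper only reads off after collecting terms; it also gives a cleaner representation to hand to any downstream functional-of-$\amsmathbb{Z}$ computation. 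The paper's route has the mild virtue of using nothing beyond elementary Gaussian covariance formulas for $\amsmathbb{W}^\perp$ and $\amsmathbb{B}^\perp$, but your argument is the one I would recommend keeping.

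Two small points worth making explicit. First, the integration-by-parts identity needs $u>0$ so that $\int_u^1 v^{-2}\dd v<\infty$ and the Wiener integral is well defined; this is fine since $T$ maps $[s_0,\infty)$ into $(0,1]$. Second, the independence $\amsmathbb{Z}_1\independent\amsmathbb{Z}_2$ relies on $\amsmathbb{B}$ being independent of $\amsmathbb{W}^\perp$, which is exactly how the limit in Theorem~\ref{thm:normality} is constructed, so you are entitled to use it.
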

\begin{proof}
We have that 
\begin{align*}
\mbox{Cov}(\amsmathbb{Z}(s),\amsmathbb{Z}(t))&= p^2(s^{-1}\wedge t^{-1}-1)+ p(1-p)(s^{-1}\wedge t^{-1}-1)\\
&\quad + p(1-p)I_1-\frac{1}{s}p(1-p)I_2-\frac{1}{t}p(1-p)I_3.
\end{align*}
Assume $t\ge s$. Then
\begin{align*}
&I_1=\int_t^1 \int_s^1 \frac{x\wedge y}{x^2y^2}\dd x\dd y=\frac{t\log(ts)+\log(t/s)-2t+2}{t}.
\end{align*}
Regarding the second integral,
\begin{align*}
I_2=\int_t^1\frac{s\wedge u-su}{u^2}\dd u=s(\log(t)-1)+s/t,
\end{align*}
and finally
\begin{align*}
I_3=\int_s^1\frac{t\wedge u-tu}{u^2}\dd u=t\log(s)+\log(t/s)-t+1.
\end{align*}
Collecting terms yields
\begin{align*}
\mbox{Cov}(\amsmathbb{Z}(s),\amsmathbb{Z}(t))&= p(s^{-1}\wedge t^{-1}-1).
\end{align*}
and by symmetry, the $s\ge t$ case yields the same formula.
\end{proof}
It is easy to verify by Taylor expansion that for $t\in(0,1)$, the domain of $\amsmathbb{Z}$, the resulting variance is always positive, as expected:
\begin{align*}
\mbox{Var}(\amsmathbb{Z}(t))=p\Big(\frac{1}{t}-1\Big)\ge 0.
\end{align*}

\subsection{Finite sample behavior}
\label{sec:ver}
We now investigate the finite sample behavior of $\sqrt{k}(\mathbb{\Lambda}_{k,n}-\Lambda^\circ)$ and $\sqrt{k}(\F_{k,n}-F^\circ)$. We present the finite behavior when the data is Fr\'echet, Burr exact Pareto.

In the four left panels of Figure \ref{fig:Ver-fre} we see the mean and the variance of $\sqrt{k}(\mathbb{\Lambda}_{k,n}-\Lambda^\circ)$ at times $s=2$ and $s=4$. The solid line is based on a sample size of $n=1,000$ and the dotted line is for when $n=10,000$. The number of simulations is 500. We observe, that the mean vanishes as $k$ decreases, which aligns with the theory. Likewise, we see that the difference between the empirical and theoretical variance vanishes as $k$ decreases. However, we observe that it becomes unstable when $k$ is too small. This phenomenon is more apparent for $n=1,000$ than $n=10,000$, and highlights the importance of the intermediate sequence satisfying $k\to\infty$ and $k/n\to0$ at the same time. We also naturally observe that the estimation becomes more stable as $n$ increases. Note that $\mbox{Var}(\mathbb{Z}\circ T(2))=3.7$, $\mbox{Var}(\mathbb{Z}\circ T(4))=24.8$, which is the reason for the large difference in magnitude between the plots at time $s=2$ and $s=4$. In the four right panels of Figure \ref{fig:Ver-fre} we provide the same plot but now for $\sqrt{k}(\F_{k,n}-F^\circ)$. The conclusions are similar, except for a slight improvement of variance, and a slight growth in bias.

Analogous analysis and conclusions may be drawn from the exact Pareto and Burr distributions, which are provided in Figures \ref{fig:Ver-burr} and \ref{fig:Ver-par}. The latter distribution has the least amount of bias, and thus has the best finite-sample behavior, while the latter has even more severe bias than the Fr\'echet case. The bias-variance tradeoff is a common theme in statistics for extremes, and paves a promising line of research in our setting, since bias-reduced estimators, in the spirit of \cite{BEIRLANT2018114}, may be constructed for the ENA and EKM estimators, taking our pathwise expressions as a starting point.

\begin{figure}[htbp!]
    \centering
        \includegraphics[width=0.49\textwidth]{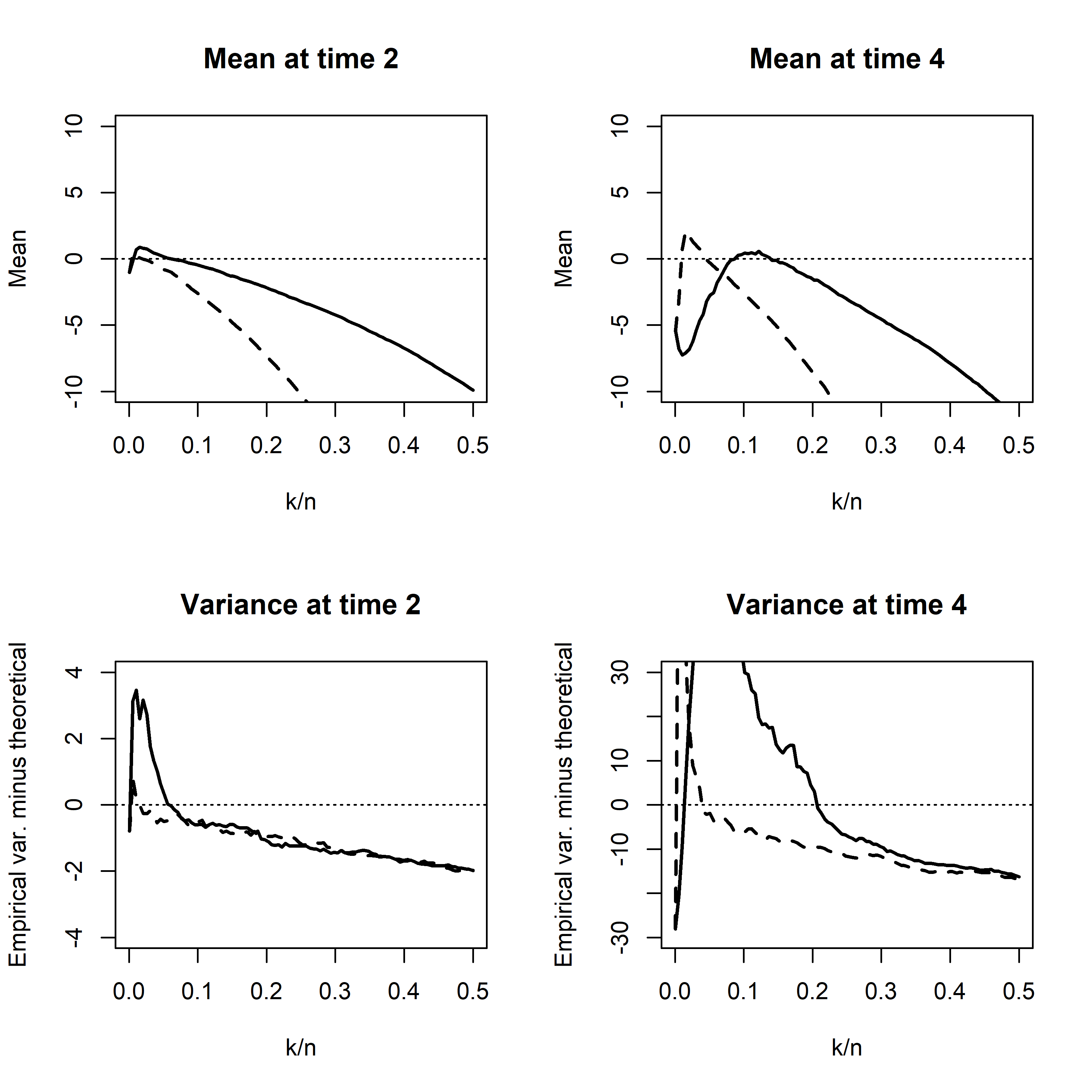}
                \includegraphics[width=0.49\textwidth]{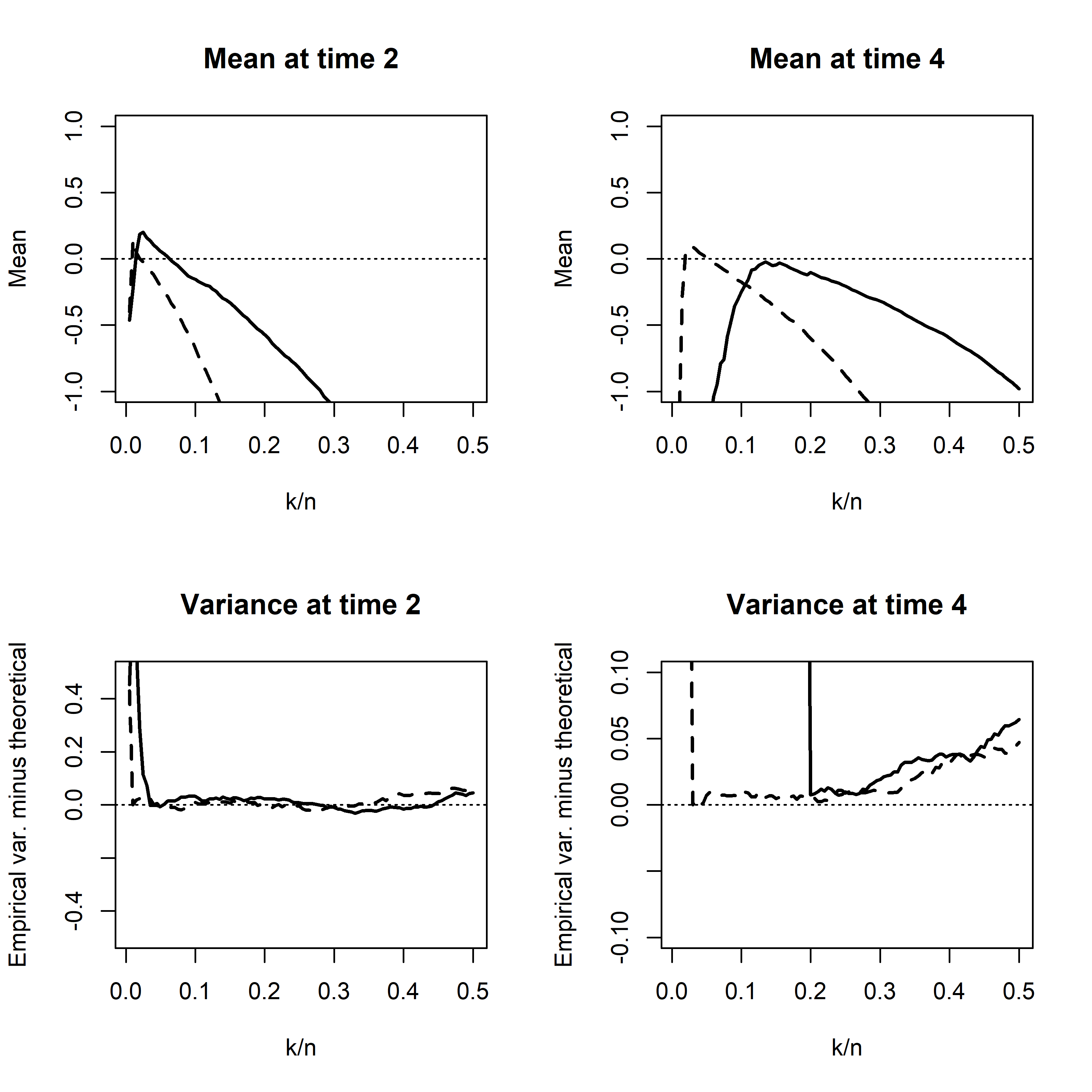}
    \caption{Finite sample behavior of $\sqrt{k}(\mathbb{\Lambda}_{k,n}-\Lambda^\circ)$ (four left panels) and $\sqrt{k}(\F_{k,n}-F^\circ)$ (four right panels). The sample sizes are $n=10,000$ (dashed line) and $n=1,000$ (solid line). The distribution is Fr\'echet with $\gamma_X=0.5$ and $\gamma_Y=1.5$, and we have $\mbox{Var}(\mathbb{Z}\circ T(2))=3.7$, $\mbox{Var}(\mathbb{Z}\circ T(4))=24.8$, $(1-F^\circ)^2(2)\mbox{Var}(\mathbb{Z}\circ T(2))=0.2$, $(1-F^\circ)^2(4)\mbox{Var}(\mathbb{Z}\circ T(4))=0.1$. } 
    \label{fig:Ver-fre}
\end{figure}

\begin{figure}[htbp!]
    \centering
            \includegraphics[width=0.49\textwidth]{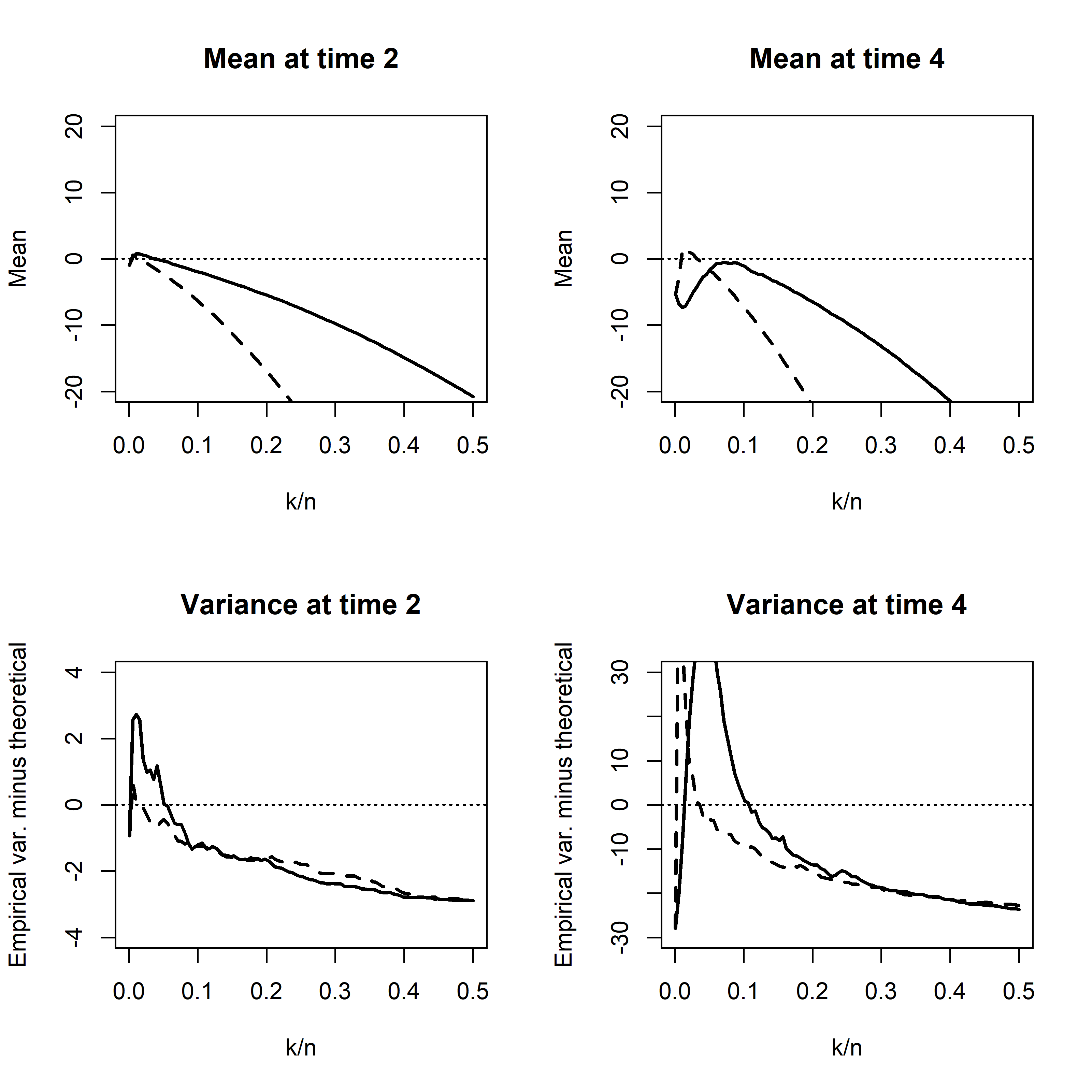}
        \includegraphics[width=0.49\textwidth]{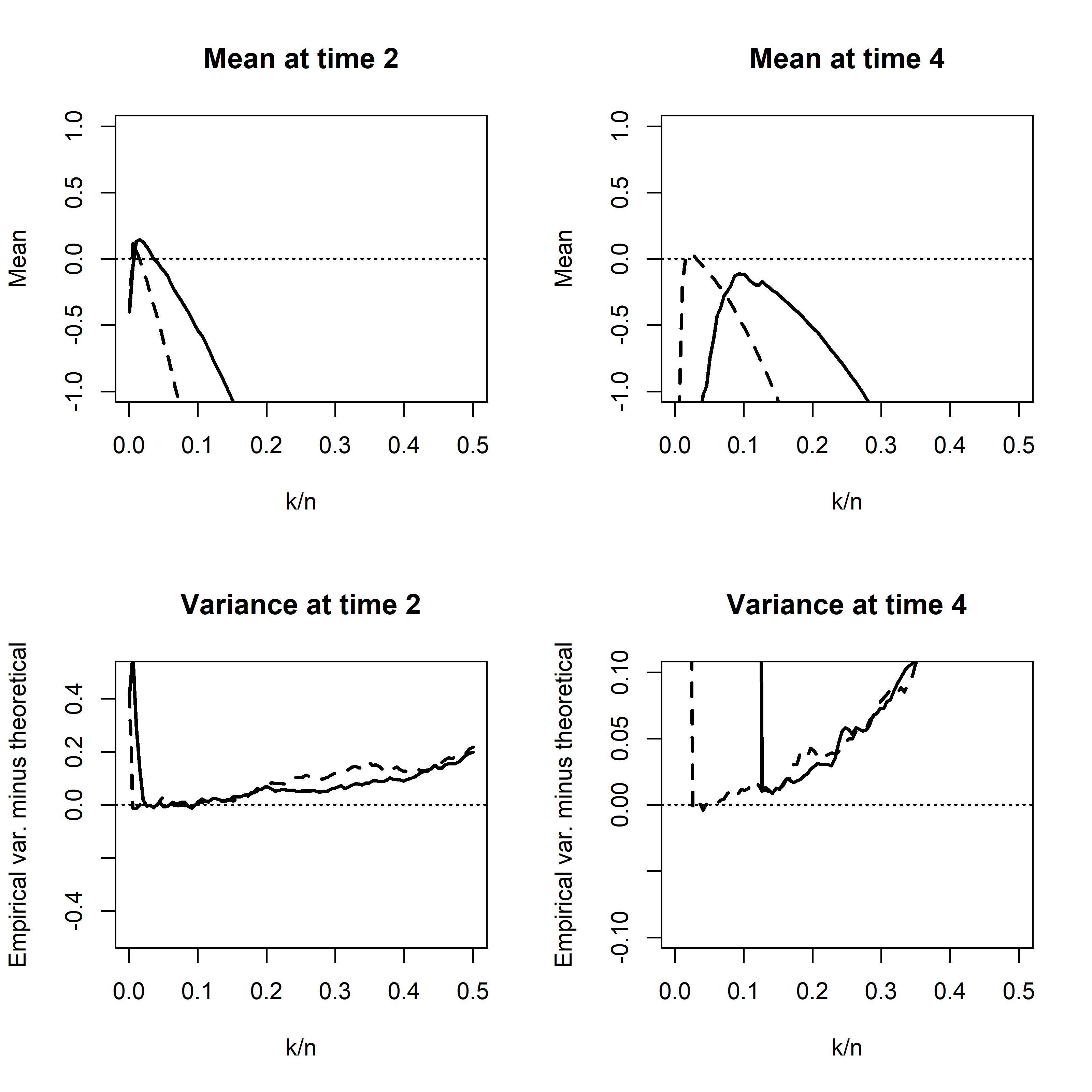}
    \caption{Finite sample behavior of $\sqrt{k}(\mathbb{\Lambda}_{k,n}-\Lambda^\circ)$ (four left panels) and $\sqrt{k}(\F_{k,n}-F^\circ)$ (four right panels). The sample sizes are $n=10,000$ (dashed line) and $n=1,000$ (solid line). The distribution is Burr with $\gamma_X=0.5$ and $\gamma_Y=1.5$, and we have $\mbox{Var}(\mathbb{Z}\circ T(2))=3.7$, $\mbox{Var}(\mathbb{Z}\circ T(4))=24.8$, $(1-F^\circ)^2(2)\mbox{Var}(\mathbb{Z}\circ T(2))=0.2$, $(1-F^\circ)^2(4)\mbox{Var}(\mathbb{Z}\circ T(4))=0.1$.} 
    \label{fig:Ver-burr}
\end{figure}

\begin{figure}[htbp!]
    \centering
            \includegraphics[width=0.49\textwidth]{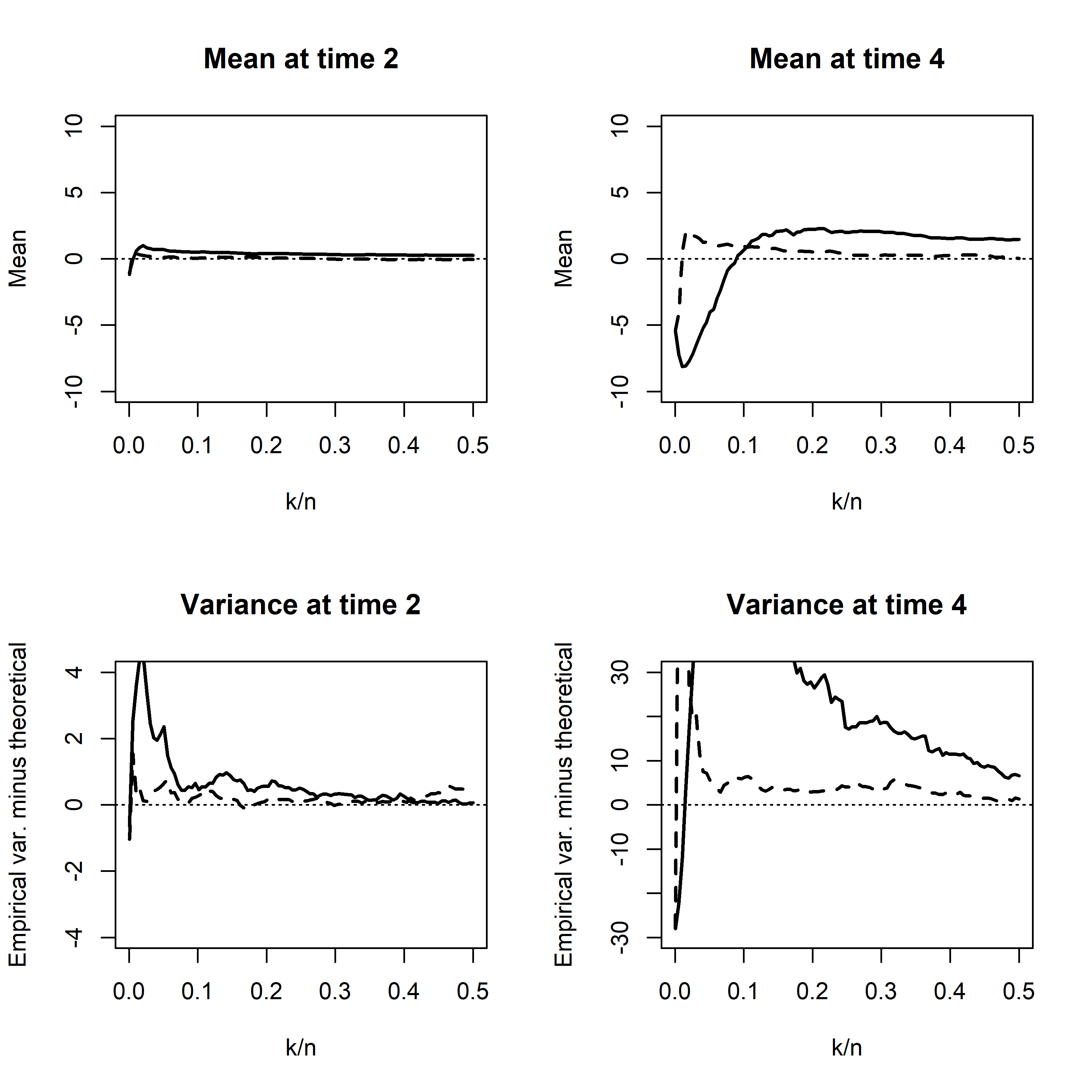}
        \includegraphics[width=0.49\textwidth]{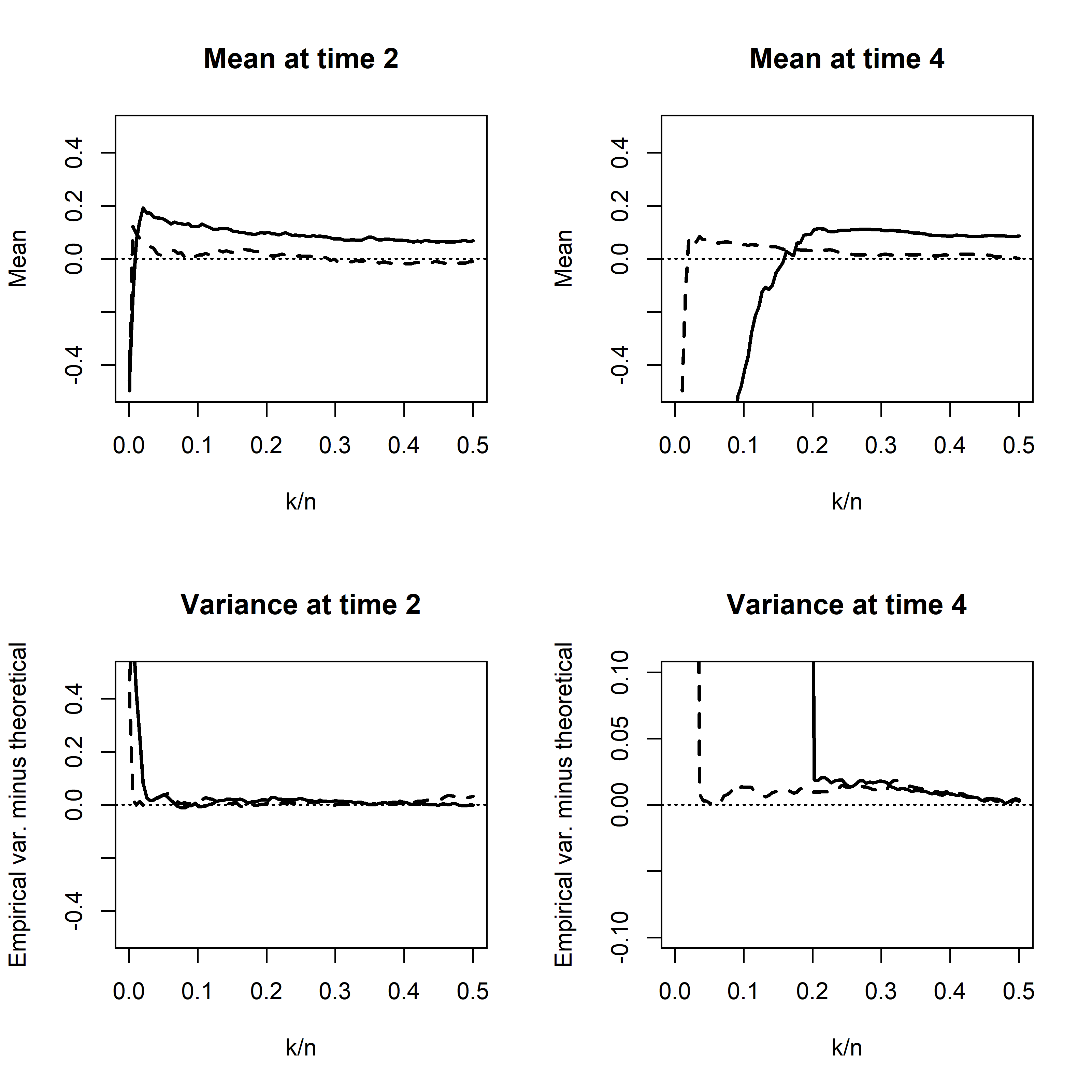}
    \caption{Finite sample behavior of $\sqrt{k}(\mathbb{\Lambda}_{k,n}-\Lambda^\circ)$ (four left panels) and $\sqrt{k}(\F_{k,n}-F^\circ)$ (four right panels). The sample sizes are $n=10,000$ (dashed line) and $n=1,000$ (solid line). The distribution is exact Pareto with $\gamma_X=0.5$ and $\gamma_Y=1.5$, and we have $\mbox{Var}(\mathbb{Z}\circ T(2))=3.7$, $\mbox{Var}(\mathbb{Z}\circ T(4))=24.8$, $(1-F^\circ)^2(2)\mbox{Var}(\mathbb{Z}\circ T(2))=0.2$, $(1-F^\circ)^2(4)\mbox{Var}(\mathbb{Z}\circ T(4))=0.1$.} 
    \label{fig:Ver-par}
\end{figure}

\section{Applications}
\label{sec: Applied}

In this section, we consider applications of the pathwise convergence of the EKM estimator. First, we show how it can be used to recover the consistency and normality of the censored Hill estimator. These asymptotics are already known, but we nonetheless provide a proof which shows how our pathwise identities simplify arguments significantly, compared to \cite{EKM}, and also under different assumptions. Secondly, we use the pathwise convergence to establish convergence of the Extreme Kolmogorv-Smirnov test and Extreme Cram\'er--von Mises statistics, inspired by their non-extreme counterparts. The latter is only possible thanks to our pathwise identities from Theorem \ref{thm:normality}. Thirdly, we show through simulations how these statistics can be used to measure the quality of asymptotic approximations and thus serve as a way to select an appropriate sample fraction $k$. Finally, we apply our results to a French non-life insurance dataset.

\subsection{Tail index estimation}
Recall that $\int_1^\infty \log(t)\dd F_X^\circ(t)=\gamma_X$ and $\amsmathbb{F}_{k,n} \prob F^\circ$, so that the following definition is appropriate as an estimator for $\gamma_X$:  
\begin{definition}[Censored Hill estimator]
\label{def:cen-Hill}
The censored Hill estimator is given by
\begin{align*}
\gamma_{n}=\int_1^\infty \log(t)\dd \amsmathbb{F}_{k,n}(t). 
\end{align*}
\end{definition}
The estimator has been indirectly investigated in \cite{EKM} through EKM integrals, but with U-statistics techniques and under different assumptions. More precisely, the paper investigates estimators of the type
\begin{align}
    \int \varphi \dd \F_{k,n},
    \label{eq:EKMI}
\end{align}
for suitable functions $\varphi$ satisfying some integrability conditions. Here, we show that our pathwise setup can lead to a novel path for the case $\varphi=\log$. Moreover, the same technique easily extends to more general $\varphi$ functions using our results and the functional delta method, though we leave the details to the reader and focus only on $\varphi=\log$.

In fact, consistency and normality follow nearly directly from Theorems \ref{thm:consistency} and \ref{thm:normality}. Indeed, by the latter results, we have that $\int_1^A \log(t)\dd \amsmathbb{F}_{k,n}$ behaves well for any fixed constant $A>1$. The missing piece is to investigate $\int_A^\infty \log(t)\dd \amsmathbb{F}_{k,n}$ for some large $A$, and more particularly to show its asymptotic negligibility. In other words, if for $A$ large enough, the latter integral does not influence the former, then the result follows. Note that we do not need to impose additional assumptions to handle the latter integral in the weak consistency case.  
\begin{theorem}[Weak consistency of the censored Hill estimator]
We have that $$\gamma_{n}\prob\gamma_X.$$
\end{theorem}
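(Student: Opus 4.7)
My plan is to truncate the unbounded integration range, apply Theorem~\ref{thm:consistency} to the compact piece, and show the tail remainder is asymptotically negligible. Applying Fubini to $\log t = \int_1^t ds/s$ yields
\begin{equation*}
\gamma_n = \int_1^\infty \log t\, d\amsmathbb{F}_{k,n}(t) = \int_1^\infty \frac{\amsmathbb{F}_{k,n}(\infty) - \amsmathbb{F}_{k,n}(s)}{s}\, ds,
\end{equation*}
which is almost surely finite since $\amsmathbb{F}_{k,n}$ is supported in $[1, Z_{n,n}/Z_{n-k,n}]$; analogously $\gamma_X = \int_1^\infty (1-F^\circ(s))/s\, ds$. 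For fixed $A > 1$ I then split $\gamma_n = I_n(A) + R_n(A)$, with $I_n(A)$ the integral over $[1, A]$ and $R_n(A)$ over $(A, \infty)$.

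The compact piece $I_n(A)$ is handled by combining (i) the uniform consistency $\sup_{t \in [1,A]}|\amsmathbb{F}_{k,n}(t) - F^\circ(t)| \prob 0$ from Theorem~\ref{thm:consistency} with (ii) the pathwise fact $\amsmathbb{F}_{k,n}(\infty) \prob 1$. For (ii), observe
\[
1 - \amsmathbb{F}_{k,n}(\infty) = \prod_{j=1}^{k}\!\Big(1 - \frac{\delta_{[n-j+1:n]}}{j}\Big) \le \exp\!\Big(-\sum_{j=1}^{k}\frac{\delta_{[n-j+1:n]}}{j}\Big);
\]
conditioning on the order statistics the concomitants are independent Bernoulli$(p(Z_{n-j+1,n}))$, and by Condition~\ref{cond1} these success probabilities converge to $p$, so the sum in the exponent has conditional mean of order $p\log k \to \infty$ and conditional variance bounded by $\sum_j j^{-2}$; Chebyshev concludes (ii). Bounded convergence then gives $I_n(A) \prob \int_1^A (1-F^\circ(s))/s\, ds$, which in turn tends to $\gamma_X$ as $A \to \infty$.

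The main obstacle is to show $R_n(A) \prob 0$ as $A \to \infty$ uniformly in $n$, without invoking the bias-type rate $\sqrt{k}B_n(s_0) \to 0$. The crude estimate $R_n(A) \le \log(Z_{n,n}/Z_{n-k,n})(1 - \amsmathbb{F}_{k,n}(A))$ is useless because the first factor is $O_\pr(\log k)$. Instead I would write $R_n(A) = \sum_{i:\,V_{i,n} > A}\log(V_{i,n}/A)\,\Delta\amsmathbb{F}_{k,n}(V_{i,n})$ with $V_{i,n} = Z_{n-i+1,n}/Z_{n-k,n}$, and apply the Duhamel identity $\Delta\amsmathbb{F}_{k,n}(V_{i,n}) = \delta_{[n-i+1:n]}(1-\amsmathbb{F}_{k,n}(V_{i,n}-))/i$. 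The essential damping is $1-\amsmathbb{F}_{k,n}(V_{i,n}-) \asymp (i/k)^{p}$, arising from $\sum_{j > i}\delta_{[n-j+1:n]}/j \approx p\log(k/i)$; combined with the heavy-tail scaling $V_{i,n} \asymp (k/i)^{\gamma}$ and the identity $p/\gamma = 1/\gamma_X$, a conditional first-moment calculation yields $\E[R_n(A)] = O(A^{-1/\gamma_X})$ uniformly in $n$, whence Markov delivers $R_n(A) \prob 0$ on sending $A \to \infty$. A standard $\varepsilon/3$ assembly of the three pieces—fix $A$ to make $\gamma_X - \int_1^A(1-F^\circ)/s\, ds$ and the bound on $\E[R_n(A)]$ both small, then send $n \to \infty$—concludes $\gamma_n \prob \gamma_X$.
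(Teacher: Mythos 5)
Your decomposition $\gamma_n = I_n(A) + R_n(A)$ into a compact piece plus a tail remainder, followed by the goal $\lim_{A\to\infty}\limsup_n\pr(R_n(A)>\eta)=0$, is exactly the structure of the paper's proof, and your handling of $I_n(A)$ (uniform consistency on $[1,A]$ together with $\amsmathbb{F}_{k,n}(\infty)\prob 1$) is fine. The real question is the tail remainder, and that is where your argument has a genuine gap.

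Your treatment of $R_n(A)$ rests entirely on the two heuristic scalings $1-\amsmathbb{F}_{k,n}(V_{i,n}-)\asymp(i/k)^p$ and $V_{i,n}\asymp(k/i)^\gamma$, followed by the assertion ``a conditional first-moment calculation yields $\E[R_n(A)]=O(A^{-1/\gamma_X})$ uniformly in $n$.'' That calculation is the entire difficulty, and it is not carried out. The relations $\asymp$ are averages: $1-\amsmathbb{F}_{k,n}(V_{i,n}-)=\prod_{j>i}(1-\delta_{[n-j+1:n]}/j)$ carries a random multiplicative fluctuation that does not disappear when you sum over $i$, and $V_{i,n}$ carries both stochastic fluctuations and a slowly varying correction to the pure Pareto form $(k/i)^\gamma$. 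Worse, the indicator $\{V_{i,n}>A\}$, the factor $\log(V_{i,n}/A)$, and the product defining the jump are all dependent, so Tonelli alone will not hand you a product of one-dimensional moments; you need a genuine argument that couples the conditional law of the scaled order statistics to a Pareto law via Potter bounds and controls the Kaplan--Meier survival product uniformly. None of that is in your sketch, and without a rigorous uniform-in-$n$ bound, the closing $\varepsilon/3$ assembly does not go through.

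The paper sidesteps exactly this difficulty: conditional on $Z_{n-k,n}=t$, the rescaled top-$k$ sample is i.i.d.\ from the conditional excess law, and Theorem~3.1 of \cite{Mauro} gives the one-line inequality $\E\bigl[\int_A^\infty \log s\,\dd\amsmathbb{F}_{k,n}(s)\mid Z_{n-k,n}=t\bigr]\le \E[\mathds{1}\{X_1^*>A\}\log X_1^*]$, i.e.\ the Kaplan--Meier expectation is dominated by the true tail expectation. From there Potter's bounds and $Z_{n-k,n}\prob\infty$ finish the job. That inequality is precisely the ingredient that converts your heuristic order-of-magnitude computation into a proof; if you want to avoid citing it, you would have to reprove something of comparable strength, and the two $\asymp$ statements by themselves are not a substitute.
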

\begin{proof}
By consistency on compacts of the EKM function, for any $\eta>1$,
\begin{align*}
\int_1^A \log(t)\dd \amsmathbb{F}_{k,n}(t)\xrightarrow{\pr} \int_1^A \log(t)\dd F^\circ(t)=\gamma_X(1-A^{-1/\gamma_X}).
\end{align*}
Thus the proof is complete if we can show that for every $\eta>0$,
\begin{align}
\lim_{A\to\infty}\limsup_{n\to\infty}\pr\left(\int_{A}^\infty \log(t)\dd\amsmathbb{F}_{k,n}(t)>\eta\right)=0. \label{eq:complete}
\end{align}
Construct the following: let $\{X_i^*\}_{i=1}^k$ and $\{Y_i^*\}_{i=1}^k$ have cdf $F^t$ and $G^t$ respectively. Then let $V_i^*=\min\{X_i^*,Y_i^*\}$ and $\delta_i^*=I(X_i^*\leq Y_i^*)$. Then we have that $\{Z_{n-k,n}/Z_{n-k,n}\}_{i=1}^k$ given $Z_{n-k,n}=t$ has the same distribution as $\{V^*_i\}_{i=1}^k$. Given $Z_{n-k,n}=t$, \eqref{eq:EKM} is then almost surely given by
\begin{align*}
    \amsmathbb{F}^t_{k,n}(x)=1-\prod_{i=1}^k \left[ 1-\frac{\delta^*_{[i,k]}}{k-i+1}\right]^{I(V^*_i\leq x)}. 
\end{align*} 
Hence we use Theorem 3.1 from \cite{Mauro} to get
\begin{align*}
    &\E\left[\int_{A}^{\infty} \log(s)\dd \amsmathbb{F}_{k,n}(s)\Bigg| Z_{n-k,n}=t\right]
    \leq \E\left[1\{X^*_1>A\}\log(X^*_1)\right].
\end{align*}Denote $\varphi(x,A)=I\{x>A\}\log(x)$. Note that $X^*_1$ has the same distribution as $\frac{U_F(t\xi)}{U_F(t)}$, where $\xi$ is Pareto(1). Hence we can use Potter's bounds to establish that there exists a $t_0\geq 1$ such that for $t>t_0$, we have
\begin{align*}
    \varphi\left(X_1^*,A\right)\leq \varphi((1+\epsilon)\xi^{\gamma_X+\epsilon},A),
\end{align*}
and
\begin{align*}
    \E\left[\varphi(X_1^*,A)\right]=\E\left[\varphi\left(\frac{X}{t},A\right)\Bigg| X>t\right]=\frac{\E\left[\varphi\left(\frac{X}{t},A\right)1_{X>t}\right]}{\pr(X>t)} \leq  \frac{\E[\varphi(X,A)]}{\pr(X>t)}:=g(t)
\end{align*}
for all $A>1$. Since $\E[\varphi(X,A)]<\infty$ we have
\begin{align*}
    \E\left[b^{Z_{n-k,n}}(A)\right]&=\E\left[b^{Z_{n-k,n}}(A)I(Z_{n-k,n}>T_0)+b^{Z_{n-k,n}}(A)I(Z_{n-k,n}>T_0)\right]\ \\
    &\leq \E[\varphi((1+\epsilon)\xi^{\gamma_X+\epsilon},A)]\pr(Z_{n-k,n}>T_0)+g(T_0)\pr(Z_{n-k,n}\leq T_0).
\end{align*}
As $n$ goes to infinity the second term goes to 0, since $Z_{n-k,n}\xrightarrow{\pr}\infty$. Then as $A$ goes to $\infty$ the first term goes to 0. Hence \eqref{eq:complete} follows by Markov's inequality. 
\end{proof}
To ensure that $\gamma_n$ is asymptotically normal, additional assumptions are needed. As in Theorem \ref{thm:normality}, we require that the growth of $k$ depends on $B_n$, however, we need to ensure that $\int_A^\infty \log(t)\dd \amsmathbb{F}_{k,n}$ behaves in a tractable way. Thus, assume that $F_X$ satisfies the Von Mises' convergence, i.e. $F_X$ satisfies $$\lim_{t\to \infty} \frac{-F_X'(t)}{t(1-F_X(t))}=\frac{1}{\gamma_X}.$$ Moreover, we further constraint the growth $k$ by imposing a deterministic growth condition depending on $F^{u_n}_X(t)$. 
\begin{theorem}[Weak convergence of the censored Hill estimator]
Let $k=k_n$ grow slow enough such that for some $s_0\in(0,1)$, $\sqrt{k}B_n(s_0)\to0$, and let $p>1/2$. Assume $F_X$ satisfies the Von Mises' convergence and 
\begin{align}
    \lim_{n\to \infty}\sqrt{k}(F^{u_n}_X(t)-F_X^\circ(t))= 0. \label{ass:un}
\end{align}
Then $$\sqrt{k}(\gamma_{n}-\gamma_X)\dist\mathcal{N}\left(0,\gamma_X^2\frac{p}{2p-1}\right).$$
\end{theorem}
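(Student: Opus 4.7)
The plan is to view $\sqrt{k}(\gamma_n-\gamma_X) = \sqrt{k}\int_1^\infty \log(t)\,\dd(\F_{k,n}-F^\circ)(t)$ and split the integration domain into a compact window $[1,A]$, on which Theorem \ref{thm:normality} combined with a functional delta/continuous mapping argument delivers a Gaussian limit, plus a tail $[A,\infty)$ that must be shown to contribute $o_{\pr}(1)$ uniformly in $n$ as $A\to\infty$. Assumption \eqref{ass:un} will annihilate the deterministic bias between $F_X^{u_n}$ and $F_X^\circ$ at rate $\sqrt{k}$, while the Von Mises hypothesis supplies enough smoothness on $F_X$ to invoke Potter-type bounds for the tail piece.

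On the compact window, integration by parts (using $\F_{k,n}(1)=F^\circ(1)=0$) converts the local piece into
\[
\sqrt{k}\int_1^A \log(t)\,\dd(\F_{k,n}-F^\circ)(t) = \log(A)\sqrt{k}(\F_{k,n}(A)-F^\circ(A)) - \int_1^A \frac{\sqrt{k}(\F_{k,n}(t)-F^\circ(t))}{t}\,\dd t.
\]
The continuous mapping theorem applied to the weak convergence of Theorem \ref{thm:normality} (integration against $1/t$ is a continuous functional on $\ell^\infty([1,A])$) then yields in the limit $-\int_1^A \log(t)\,\dd[(1-F^\circ)\amsmathbb{Z}\circ T](t)$. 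The substitution $u=T(t)=t^{-1/(p\gamma_X)}$, noting $\gamma=p\gamma_X$ so that $1-F^\circ(t)=u^p$ and $\dd\log t=-p\gamma_X\,\dd u/u$, transforms this into $-p\gamma_X\int_{T(A)}^1 u^{p-1}\amsmathbb{Z}(u)\,\dd u$, which extends to $-p\gamma_X\int_0^1 u^{p-1}\amsmathbb{Z}(u)\,\dd u$ as $A\to\infty$ (square integrability of this limit requires exactly $p>1/2$, via Lemma \ref{lemma:computation}).

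The main obstacle is uniform tail control. For the deterministic tail one checks directly that $\sqrt{k}\int_A^\infty \log(t)\,\dd F^\circ(t) = \sqrt{k}\,A^{-1/\gamma_X}(\gamma_X + \log A)$, which forces $A=A_n$ to be chosen jointly with $n$. For the stochastic tail $\sqrt{k}\int_A^\infty \log(t)\,\dd\F_{k,n}(t)$, I would upgrade the argument of the consistency proof: conditional on $Z_{n-k,n}=t$, the estimator coincides with an EKM built from i.i.d. auxiliaries $V_i^*$, and Theorem 3.1 of \cite{Mauro} together with the Potter-bound domination $\varphi(X_1^*,A)\le\varphi((1+\varepsilon)\xi^{\gamma_X+\varepsilon},A)$ (valid once $t\ge t_0$, courtesy of the Von Mises condition) provides a moment bound of order $\E[\log(X_1^*)I(X_1^*>A)]$. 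The Pareto-type decay of this bound, combined with $p>1/2$, suffices to make the truncation error $o(k^{-1/2})$ along a suitable sequence $A_n\to\infty$, and assumption \eqref{ass:un} handles the remaining bias $\sqrt{k}\int_1^A \log(t)\,\dd(F_X^{u_n}-F_X^\circ)(t)$ via uniform control on compacts.

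Finally, the asserted variance follows from Lemma \ref{lemma:computation}: writing
\[
\mbox{Var}\!\left(p\gamma_X\int_0^1 u^{p-1}\amsmathbb{Z}(u)\,\dd u\right) = p^3\gamma_X^2 \iint_{[0,1]^2} u^{p-1}v^{p-1}\bigl((u\vee v)^{-1}-1\bigr)\,\dd u\,\dd v,
\]
splitting by symmetry into $2\int_0^1\int_0^u$, evaluating the inner integral to $u^p/p$, and the outer to $2[(2p-1)^{-1}-(2p)^{-1}]$, collapses the expression to $\gamma_X^2 p/(2p-1)$, matching the stated asymptotic variance.
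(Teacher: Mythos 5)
Your decomposition into a compact window $[1,A]$ plus a tail $[A,\infty)$ is the right starting point, and your variance computation at the end is correct and matches the paper's (after the substitution $u=T(t)$, the $p>1/2$ condition gives square integrability, and $\gamma=p\gamma_X$ collapses the double integral to $\gamma_X^2 p/(2p-1)$). The sign in your local limit is flipped relative to the paper's display, but that is immaterial since $\amsmathbb{Z}$ is symmetric.

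The gap is in the tail control, and it is a real one. You propose to bound the stochastic tail $\sqrt{k}\int_A^\infty\log t\,\dd\F_{k,n}$ via Mauro's first-moment inequality (the same tool used in the consistency proof), obtaining a bound of order $\E[\log(X_1^*)I(X_1^*>A)]\asymp A^{-1/\gamma_X}\log A$. But this is a bound on the \emph{uncentered} integral, not on the difference $\F_{k,n}-F^\circ$, so after multiplying by $\sqrt{k}$ it diverges for fixed $A$. You acknowledge this by suggesting $A=A_n\to\infty$, but then the compact-window convergence from Theorem \ref{thm:normality} no longer applies, as that result is stated only on bounded intervals; taking $A_n\to\infty$ jointly with $n$ would require a uniformity statement that is not available. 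In short, your decomposition separates $\dd\F_{k,n}$ from $\dd F^\circ$, which makes both pieces blow up at rate $\sqrt{k}$ and leaves you needing an explicit cancellation you cannot produce from a one-sided moment bound. The paper avoids exactly this trap by a telescoping decomposition of the \emph{difference} in the integrand,
\begin{align*}
\F_{k,n}-F_X^\circ=(\F_{k,n}-F_X^{Z_{n-k,n}})+(F_X^{Z_{n-k,n}}-F_X^{u_n})+(F_X^{u_n}-F_X^\circ),
\end{align*}
so that each summand is genuinely small: the first is a Kaplan--Meier estimator minus its conditional target, handled by a dedicated concentration-type result from the supplementary material of \cite{EKM}; the third is killed directly by assumption \eqref{ass:un}; and the middle term is analysed via a change of variables, regular variation, and $Z_{n-k,n}/u_n\prob1$. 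You have also misidentified the role of the Von Mises condition: in the paper it is not there to enable Potter bounds, but to show via the delta method that $\sqrt{k}\big(\bar F(u_n)/\bar F(Z_{n-k,n})-1\big)$ is tight, which is needed to control the middle term of the telescope. Without the paper's decomposition (or an equivalent concentration inequality for the EKM tail integral centered at its conditional target), your sketch does not close.
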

\begin{proof}
For $A>1$, we have that
\begin{align*}
    \int_{1}^A \frac{\sqrt{k}({\amsmathbb{F}}_{k,n}(t)-F_X^\circ(t))}{t}\dd t \dist \int_1^A\frac{1-F_X^\circ(t)}{t}\, \amsmathbb{Z}\circ T(t) \dd t
\end{align*}
by Theorem \ref{thm:normality}. If
for every $\eta>0$, we have
\begin{align}
\lim_{A\to\infty}\limsup_{n\to\infty}\pr\left(\int_{A}^\infty \sqrt{k}\frac{\amsmathbb{F}_{k,n}-F_X^\circ (t)}{t}\dd t>\eta\right)=0. \label{eq:normality}
\end{align}
then 
\begin{align*}
    \sqrt{k}(\gamma_n-\gamma) \dist \int_1^\infty \frac{1-F_X^{\circ}}{t}\amsmathbb{Z}\circ T(t)\dd t.
\end{align*}
 We have
\begin{align}
    \int_{A}^\infty \sqrt{k}\frac{\amsmathbb{F}_{k,n}-F_X^\circ (t)}{t}\dd t&= \int_{A}^\infty \sqrt{k}\frac{\amsmathbb{F}_{k,n}-F_X^{Z_{n.k,n}} (t)}{t}\dd t\label{eq:1} \\&\quad+ \int_{A}^\infty \sqrt{k}\frac{F_X^{Z_{n-k,n}}(t)-F_X^{u_n} (t)}{t}\dd t\label{eq:2}\\
    &\quad +\int_{A}^\infty \sqrt{k}\frac{F_X^{u_n}(t)-F_X^\circ(t)}{t}\dd t, \label{eq:3}
\end{align}
where $F_X^t(x)=\frac{1-F_X(tx)}{1-F_X(t)}$. We have that
\eqref{eq:1} goes to 0 by p.xiv in the Supplementary Material of \cite{EKM}, and \eqref{eq:3} goes to 0 by Assumption \ref{ass:un}. Now we turn our attention to \eqref{eq:2}. Denote $\Bar{F}_X(t)=1-F_X(t)$. We have that 
\begin{align*}
    \Bar{F}_X^{u_n}(tZ_{n-k,n}/u_n)&=\frac{1-F(Z_{n-k,n}t)}{1-F(u_n)} = \Bar{F}_X^{Z_{n-k,n}}(t)\frac{\Bar{F}_X(Z_{n-k,n})}{\Bar{F}_X(u_n)}.
\end{align*}
So \eqref{eq:2} can be written as
\begin{align}
    &\int_{A}^\infty \sqrt{k}\frac{\Bar{F}_X^{Z_{n-k,n}}(t)-\Bar{F}_X^{u_n} (t)}{t}\dd t=\nonumber\\ &\quad \quad \quad \quad 
    \sqrt{k}\left(\frac{\Bar{F}(u_n)}{\Bar{F}(Z_{n-k,n})}-1\right)\int_A^\infty \frac{\Bar{F}_X^{u_n}\left(t\frac{Z_{n,k,n}}{u_n}\right)}{t}\dd t \label{eq:5}\\
    &\quad\quad \quad \quad +\int_A^{\infty} \sqrt{k} \frac{\Bar{F}_X^{u_n}\left(t\frac{Z_{n,k,n}}{u_n}\right)-\Bar{F}_X^{u_n} (t)}{t}\dd t. \label{eq:4}  
\end{align}
We first look at \eqref{eq:4}. By a change of variables, we may rewrite it as
\begin{align*}
   \int_A^{\infty} \sqrt{k} \frac{\Bar{F}_X^{u_n}(t)-\Bar{F}_X^{u_n} (t)}{t}\dd t + \int_{A\frac{Z_{n,k,n}}{u_n}}^A \frac{\Bar{F}_X^{u_n} (t)}{t}\dd t=\int_{A\frac{Z_{n,k,n}}{u_n}}^A \sqrt{k}\frac{\Bar{F}_X^{u_n} (t)}{t}\dd t.
\end{align*}
Note that $\frac{Z_{n,k,n}}{u_n}\prob 1$ by Theorem \ref{thm:prob} as $n/k \to \infty$, hence for $\epsilon>0$, and $n/k$ large enough, we have with arbitrary high probability that
\begin{align*}
    \int_{A\frac{Z_{n,k,n}}{u_n}}^A \sqrt{k}\frac{\Bar{F}_X^{u_n} (t)}{t}\dd t&\le \left(\frac{Z_{n,k,n}}{u_n}A-A\right)\sqrt{k}\sup_{t\in (A(1-\epsilon),A(1+\epsilon))}\frac{\Bar{F}_X^{u_n}(t)}{t}\\
    &= \left(\frac{Z_{n,k,n}}{u_n}A-A\right)\sqrt{k}\frac{\Bar{F}_X^{u_n}((1-\epsilon)A)}{(1-\epsilon)A} \\
    &= \sqrt{k}\left(\frac{Z_{n,k,n}}{u_n}-1\right) C \Bar{F}_X^{u_n}((1-\epsilon)A).
\end{align*}
By Theorem \ref{thm:quan}, we have that $\sqrt{k}\left(\frac{Z_{n,k,n}}{u_n}-1\right)\dist \mathcal{N}(0,\gamma)$. Since $\Bar{F}_X^{u_n}$ goes to $\Bar{F}_X^\circ$, we have that the integral vanishes in probability as $n/k\to \infty$ and thereafter $A\to \infty$. Now look at \eqref{eq:5}. By Potter bounds, the integral vanishes as $A\to\infty$, so the result follows if $\sqrt{k}\left(\frac{\Bar{F}(u_n)}{\Bar{F}(Z_{n-k,n})}-1\right)$ is bounded in probability. By Theorem \ref{thm:quan}, we have that
\begin{align*}
    \frac{\sqrt{k}}{u_n}\left(Z_{n,k,n}-u_n\right)\dist \mathcal{N}(0,\gamma^2).
\end{align*}
So using the delta method with $\Bar{F}$, we have
\begin{align*}
    D_n:=\sqrt{k}\frac{\Bar{F}(u_n)}{-u_n F'(u_n)}\left(\frac{\Bar{F}(Z_{n-k,n})}{\Bar{F}(u_n)}-1\right) \dist \mathcal{N}(0,\gamma^2).
\end{align*}
Then notice that
\begin{align*}
    \sqrt{k}\left(\frac{\Bar{F}(Z_{n-k,n})}{\Bar{F}(u_n)}-1\right)=\frac{-u_nF'(u_n)}{\Bar{F}(u_n)}D_n,
\end{align*}
where the fraction on the right-hand side converges to a constant by the Von Mises convergence. This establishes that $\sqrt{k}\left(\frac{\Bar{F}(Z_{n-k,n})}{\Bar{F}(u_n)}-1\right)$ is bounded in probability.

The variance of the Gaussian limit is calculated from
\begin{align*}
\int_1^\infty\frac{1-F_X^\circ(t)}{t}\, \amsmathbb{Z}\circ T(t) \dd t=\int_1^\infty t^{-1/\gamma_X-1}\amsmathbb{Z}(t^{-1/\gamma})\dd t 
=\gamma\int_0^1 t^{p-1} \amsmathbb{Z}(t)\dd t,
\end{align*}
and since $p=\gamma/\gamma_X$ then the variance is given by Lemma \ref{lemma:computation} by
\begin{align*}
\gamma^2\int_{0}^1\int_{0}^1t^{p-1} s^{p-1} \{s^{-1}\wedge t^{-1}-1\} \dd s\dd t=\gamma^2\frac{1}{p(2p-1)} =\gamma^2_X \frac{p}{(2p-1)}.
\end{align*}
\end{proof}

\subsection{Goodness-of-fit statistics}
In addition to estimating the tail index, the EKM integral in \eqref{eq:EKMI} can estimate several other relevant tail quantities by selecting a suitable $\varphi$ -- for example the conditional tail moments. The common feature of these estimators is that they depend on how well $\F_{k,n}$ estimates $F_X^\circ$. Here the choice of $k$ is of great importance and an open problem. Generally, selecting a low $k$ will result in a small bias but a large variance, with the opposite case for a large $k$. Balancing these two effects is an intricate matter. A surprisingly simple rule of thumb with good behavior for very small samples is often to select $k=0.2n$. As $n$ grows, more sophisticated rules can be formulated. Therefore, in this section we present an adaptive selection rule based on the observed sample. More precisely, our rule is based on GoF statistics, which are available since we have shown pathwise convergence.
 Specifically, we use the Extreme Kolmogorov--Smirnov and Extreme Cram\'er--von Mises statistics, which are respectively defined by
\begin{align*}
\sqrt{k}||\F_{k,n}-F||_\infty \quad \text{and} \quad k\int(\F_{k,n}-F_X^\circ)^2\dd F_X^\circ.
\end{align*}
By Theorem \ref{thm:normality}, the distributions of the statistics are known.
\begin{corollary}[Asymptotic representation of goodness-of-fit statistics]
The asymptotic representations of the Extreme Kolmogorov--Smirnov and Extreme Cram\'er--von Mises statistics are given, respectively, by 
\begin{align} \label{eq:GoF}
    ||(1-F^\circ_X)\, \amsmathbb{Z}\circ T||_\infty \quad \text{and} \quad  \int_1^\infty ((1-F^\circ_X)\, \amsmathbb{Z}\circ T)^2(s)\dd F^\circ_X(s).
\end{align}
\end{corollary}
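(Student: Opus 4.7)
The proof plan is a direct application of the continuous mapping theorem to the functional weak convergence $\sqrt{k}(\amsmathbb{F}_{k,n}-F^\circ)\dist (1-F^\circ)\,\amsmathbb{Z}\circ T$ established in Theorem \ref{thm:normality}. The only subtlety is that Theorem \ref{thm:normality} is stated on bounded intervals, whereas both goodness-of-fit statistics take a sup (respectively integrate) over the full tail domain $[1,\infty)$, so the contribution of the far tail must be controlled separately.

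First, for any fixed $A>1$, the maps $f\mapsto \sup_{t\in[1,A]}|f(t)|$ and $f\mapsto \int_1^A f(t)^2\,\dd F_X^\circ(t)$ are continuous on $\ell^\infty([1,A])$, so the continuous mapping theorem immediately yields
\begin{align*}
\sqrt{k}\sup_{t\in[1,A]}|\amsmathbb{F}_{k,n}(t)-F^\circ(t)| &\dist \sup_{t\in[1,A]}|(1-F^\circ(t))\,\amsmathbb{Z}(T(t))|,\\
k\int_1^A (\amsmathbb{F}_{k,n}(t)-F^\circ(t))^2\,\dd F_X^\circ(t) &\dist \int_1^A ((1-F_X^\circ)\,\amsmathbb{Z}\circ T)^2(s)\,\dd F_X^\circ(s).
\end{align*}

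Second, to extend to $[1,\infty)$, I would combine the above with (a) convergence of the truncated limits to the full-domain limits as $A\to\infty$, and (b) asymptotic negligibility of the prelimit tails, i.e.\ $\lim_{A\to\infty}\limsup_{n\to\infty}\pr(\text{tail beyond }A>\eta)=0$ for every $\eta>0$. Part (a) for the Cramér--von Mises is handled by monotone convergence once finiteness is verified: Lemma \ref{lemma:computation} gives $\E[((1-F_X^\circ(t))\amsmathbb{Z}(T(t)))^2]=(1-F_X^\circ(t))^2\, p(T(t)^{-1}-1)$, which is integrable against $\dd F_X^\circ$ precisely under the condition $p>1/2$. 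For the Kolmogorov--Smirnov statistic, the limit sup is almost surely finite since the weight $1-F_X^\circ(t)$ decays fast enough to dominate the growing variance of $\amsmathbb{Z}\circ T$ near the boundary, again under $p>1/2$.

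For part (b), the integral tail is controlled via Markov's inequality together with bounds on $\E[(\amsmathbb{F}_{k,n}(t)-F^\circ(t))^2]$ obtained by the same conditioning technique used in the weak consistency proof of the censored Hill estimator: conditioning on $Z_{n-k,n}=t$, comparing $\amsmathbb{F}_{k,n}$ with its stationary counterpart $\amsmathbb{F}^t_{k,n}$ via Theorem 3.1 of \cite{Mauro}, and applying Potter bounds. The main obstacle will be the Kolmogorov--Smirnov tail, because $\amsmathbb{F}_{k,n}$ is eventually constant beyond $Z_{n,n}/Z_{n-k,n}$ while $F^\circ(t)\to 1$; this forces a careful split at $Z_{n,n}/Z_{n-k,n}$ together with an application of the order-statistic asymptotics from Theorem \ref{thm:quan} to bound $\sqrt{k}(1-F^\circ(Z_{n,n}/Z_{n-k,n}))$ in probability.
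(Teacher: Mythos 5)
Your proof takes a genuinely different and more careful route than the paper's. The paper's proof is a single sentence: it asserts that the maps $z\mapsto\|z\|_\infty$ and $z\mapsto\int z^2\,\dd t$ are continuous in the sup-norm on $D([-\infty,\infty])$ and invokes the continuous mapping theorem together with Theorem \ref{thm:normality}. It does not address the point you raise, namely that Theorem \ref{thm:normality} is stated only ``on bounded intervals,'' so the continuous mapping theorem is not literally applicable to functionals over the whole ray $[1,\infty)$ without first upgrading to convergence in a space on the unbounded domain. Your truncation-plus-uniform-tightness scheme — CMT on $[1,A]$, then a $\lim_{A\to\infty}\limsup_n$ control of the tail via Markov/conditioning/Potter bounds, together with a.s.\ finiteness of the limit — is the standard way to close that gap rigorously, and it closely mirrors the structure the paper itself uses later for the weak convergence of the censored Hill estimator. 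So your approach buys a complete argument where the paper's is essentially a (correct in spirit but unargued) appeal to convergence on the compactified domain.

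One technical inaccuracy in your outline: for the Cram\'er--von Mises limit, integrability of $\E[((1-F_X^\circ)\amsmathbb{Z}\circ T)^2]$ against $\dd F_X^\circ$ holds precisely when $p>1/3$, not $p>1/2$. Writing the limit as $p\int_0^1 u^{3p-1}\amsmathbb{Z}(u)^2\,\dd u$ and using $\E[\amsmathbb{Z}(u)^2]=p(u^{-1}-1)$ gives $p^2\int_0^1(u^{3p-2}-u^{3p-1})\,\dd u$, which is finite iff $3p-2>-1$, i.e.\ $p>1/3$. The condition $p>1/2$ is the correct one for the Kolmogorov--Smirnov sup: rewriting the sup as $\sup_{u\in(0,1]}u^p|\amsmathbb{Z}(u)|$ and noting $\amsmathbb{Z}$ grows like $u^{-1/2}\sqrt{\log\log(1/u)}$ as $u\downarrow0$ shows a.s.\ finiteness iff $p>1/2$. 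It is also worth flagging that the paper's corollary imposes no condition on $p$ at all, so your observation that some condition is actually needed for the limits to be nondegenerate is itself a useful correction, even if the threshold for CvM is $1/3$ rather than $1/2$.
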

\begin{proof} 
 Since the maps
\begin{align*}
z\mapsto||z||_\infty,\quad z\to\int z^2(t)\dd t, 
\end{align*}
from $D([-\infty,\infty])$ to $\amsmathbb{R}$ are continuous in the sup-norm, the result follows from the continuous mapping theorem and Theorem \ref{thm:normality}.
\end{proof}

Note that the variance of $\amsmathbb{Z} \circ T(s)$ goes to $\infty$ as $s$ goes to $\infty$, hence the ENA-version of the statistics cannot be defined directly. It is also interesting to observe that the above representations depend on $T$, and thus are not distribution-free, as is the case for their non-extreme counterparts.

Since $F_X^\circ$ is unknown we use in practice the following plug-in approximations
\begin{align}
    \sqrt{k}||\F_{k,n}-F^\circ_{X,\hat{\gamma}_n}||_\infty \quad \text{and} \quad  k\int(\F_{k,n}-F^\circ_{X,\hat{\gamma}_n})^2\dd F^\circ_{X,\hat{\gamma}_n}, \label{eq:emp-GoF}  
\end{align}
where $F^\circ_{X,\hat{\gamma}}(s)=1-s^{1/\hat{\gamma}}$ and $\hat{\gamma}$ is an estimate of $\gamma_X$. In other words, for our proposed selection rules, we use \eqref{eq:emp-GoF} as a proxy for \eqref{eq:GoF}. 

In Figure \ref{fig:all} we show the pathwise convergence for the Kolmogorov--Smirnov and Cram\'er--von Mises statistics for different distributions in the Fr\'echet-domain. As expected, we observe that the paths from the Fr\'echet and Burr distributions converge to the Pareto path as $k$ decreases. Importantly, they converge from above. This provides inspiration to formulate a rule by selecting an upper bound. 
 \begin{figure}[htbp!]
    \centering
        \includegraphics[width=0.4\textwidth]{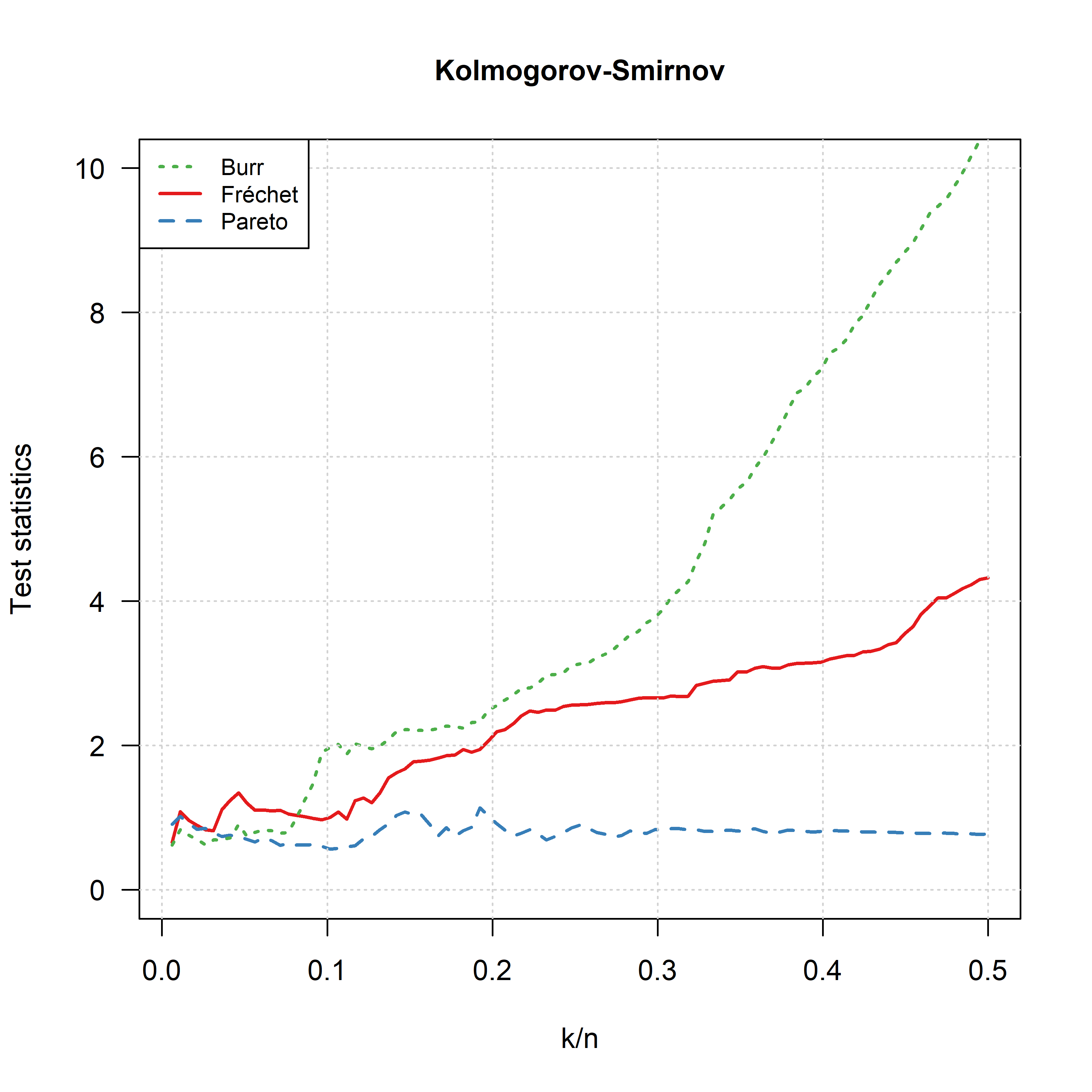}
        \includegraphics[width=0.4\textwidth]{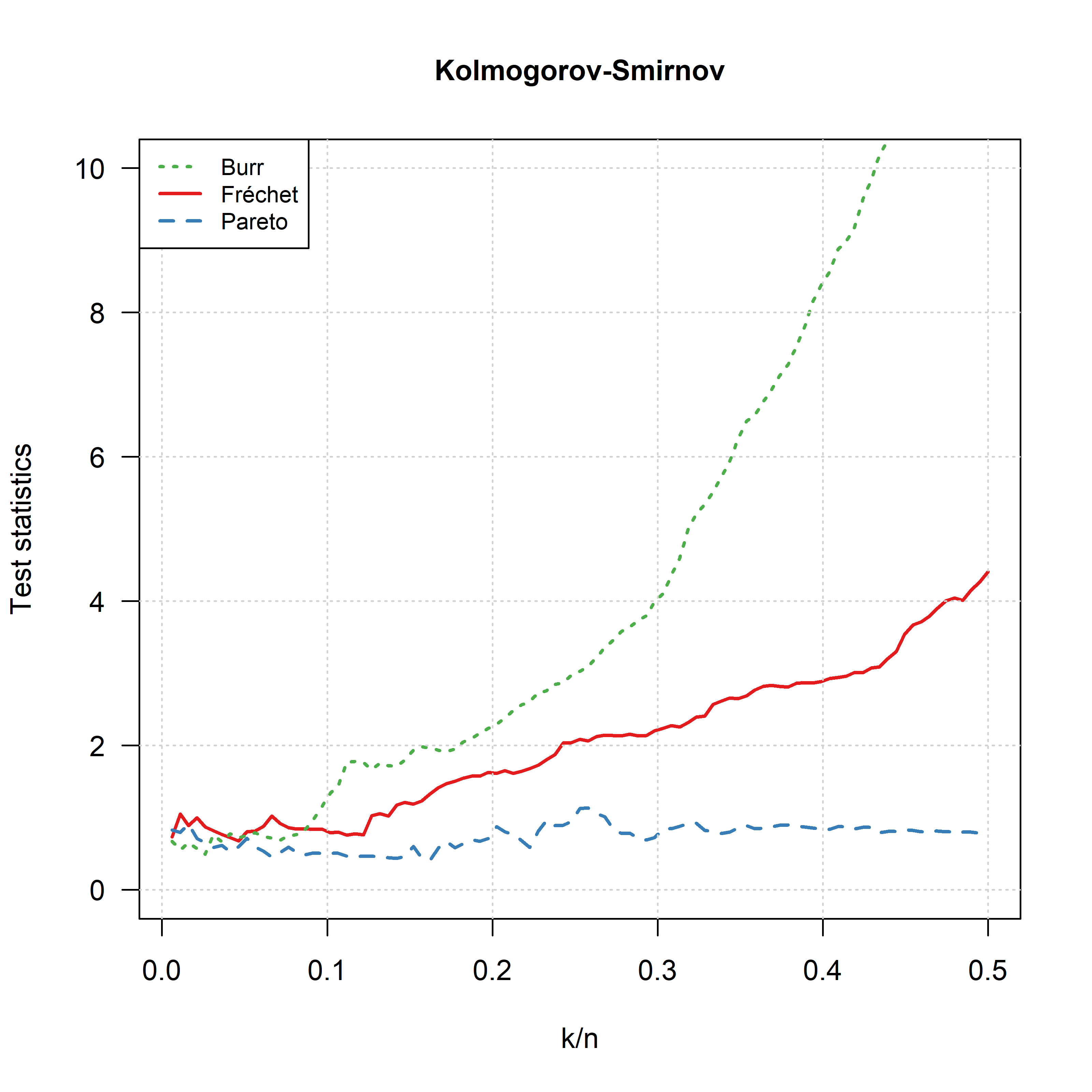}
        \includegraphics[width=0.4\textwidth]{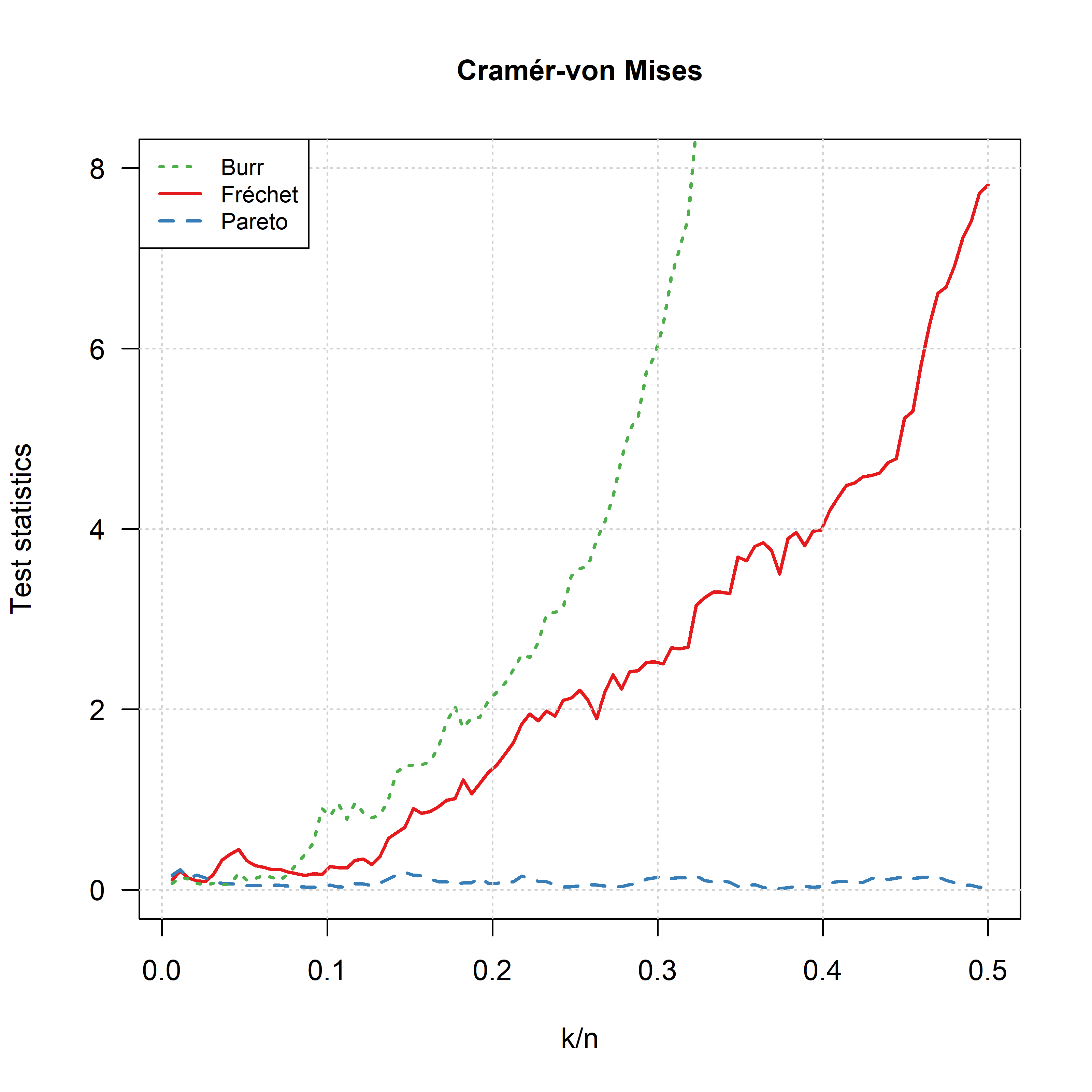}
        \includegraphics[width=0.4\textwidth]{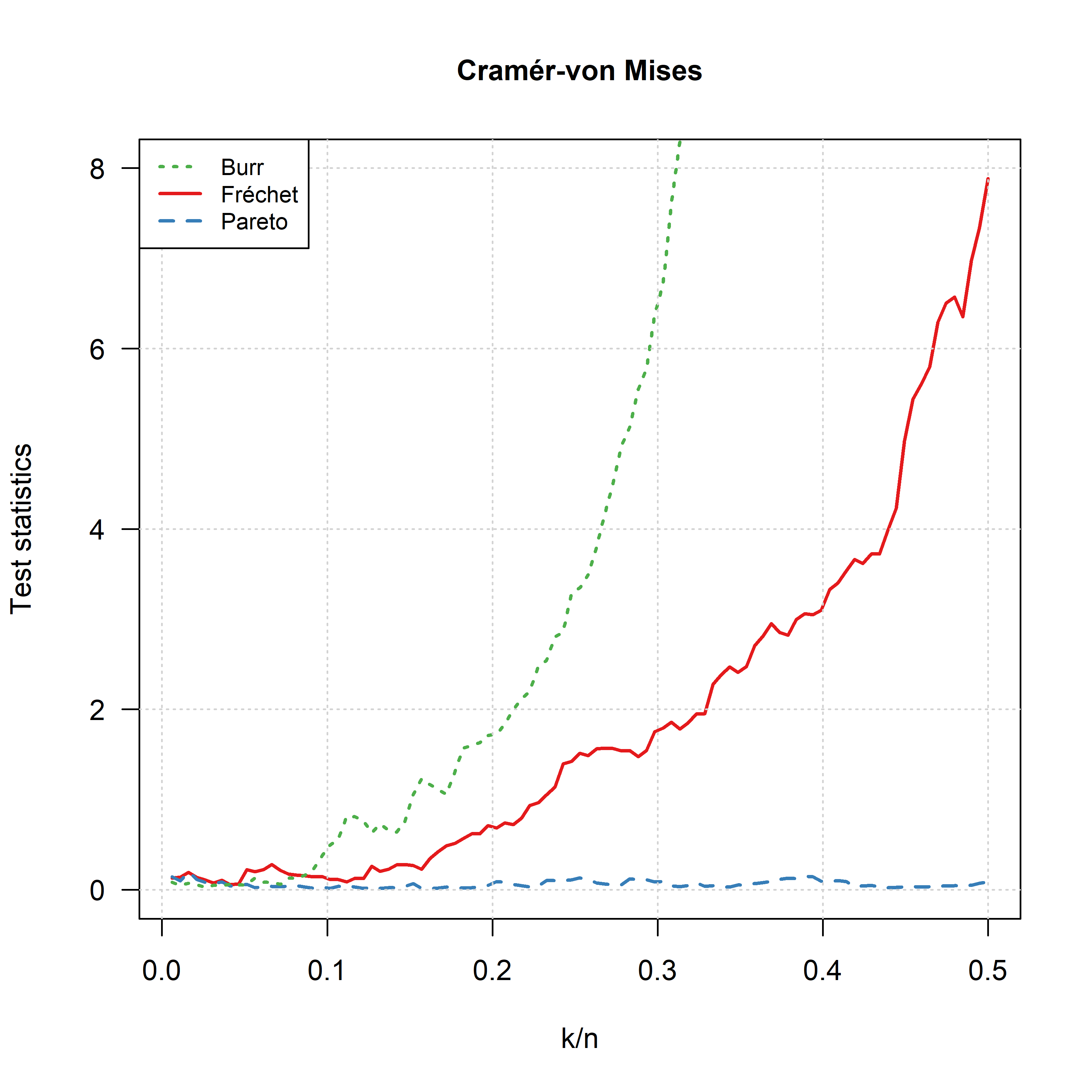}
        \includegraphics[width=0.4\textwidth]{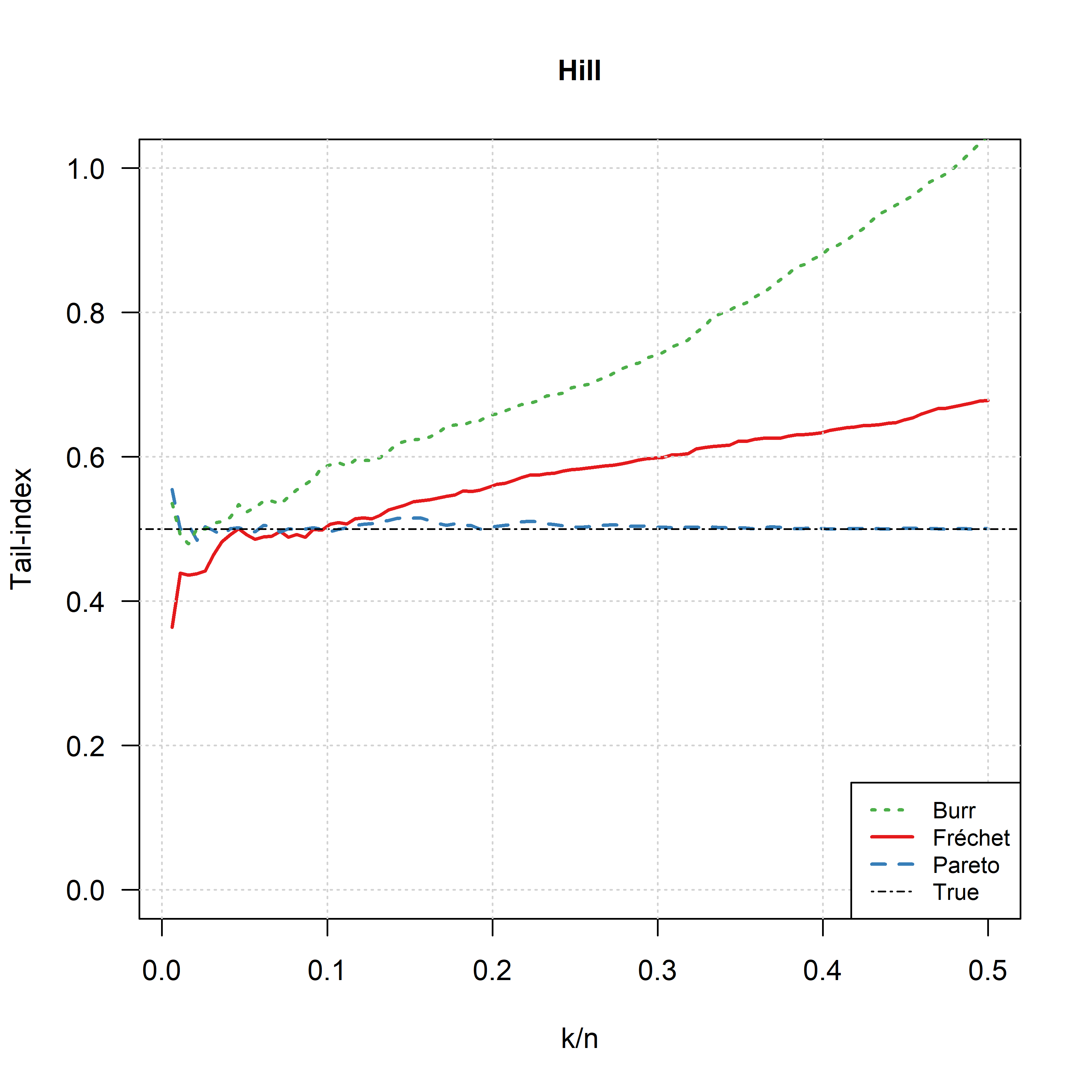}
        \includegraphics[width=0.4\textwidth]{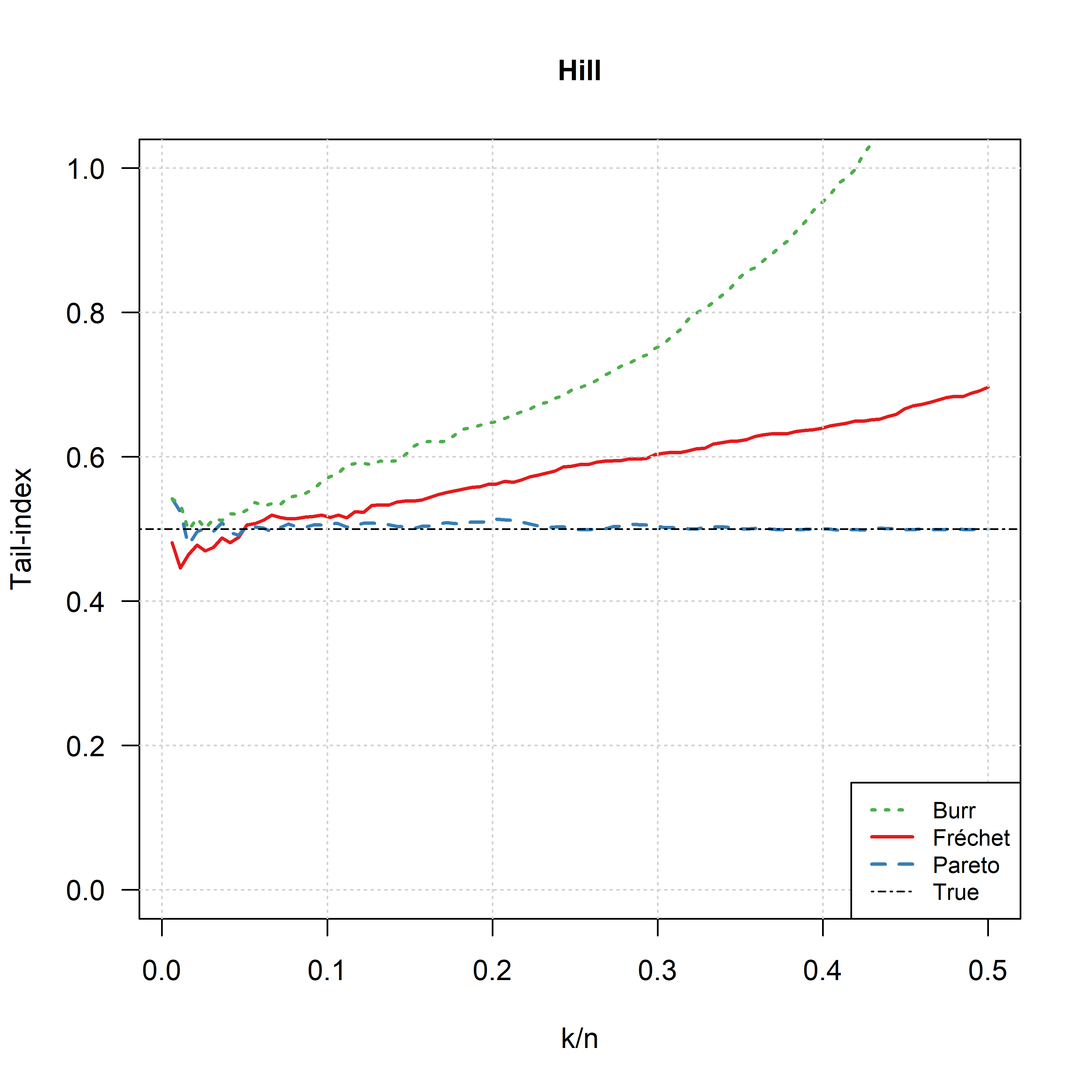}
    \caption{The evolution of the Extreme Kolmogorov--Smirnov (top) and Extreme Cram\'er--von Mises statistics (middle), together with the censored Hill (bottom). The left panels show the $\gamma_X=0.5$, $\gamma_Y=1.5$ case, while the right panes show the $\gamma_X=0.5$, $\gamma_Y=5$ case. The sample size is $n=10,000$.} 
    \label{fig:all}
\end{figure}

We propose two different selection rules:
\begin{itemize}
    \item[i)] Pick the largest $k$ such that $\sqrt{k}||\F_{k,n}-F^\circ_{X,\hat{\gamma}_n}||_\infty< L$ for some $L>0$.
    \item[ii)] Pick the largest $k$ such that $k\int(\F_{k,n}-F^\circ_{X,\hat{\gamma}_n})^2\dd F^\circ_{X,\hat{\gamma}_n} $ for some $L>0$.
\end{itemize}
The choice of $L$ is of course pivotal. If we pick $L$ too small, then the GoF statistics can never get below $L$, and likewise, we want to avoid an overly large $L$, since then the GoF statistics would lie below it for all $k$. Both instances would lead to an inadequate procedure for $k$ selection. We want to hit a sweet spot, where the $p$-value captures the variability in the limit distribution but still can be used to differentiate between when the discrepancy of $\F_{k,n}$ and $F_X^\circ$ is too great. We see in the next subsection that despite this caveat, the above procedures are robust to changes in $L$, so the most important aspect is simply avoiding degeneracy.

\subsection{Simulation study}
\label{sec:sim}
To test the selection rule from the previous section, we simulate from the following distributions:

\begin{itemize}
    \item Burr with density $f(x)=\frac{\alpha\tau x^\tau}{x(1+x^\tau)^\alpha}$, where $\tau, \alpha>0$. Throughout the section $\alpha=1$, where $\tau$ varies. In this case the corresponding tail index is $\gamma=\frac{1}{\tau}$.
    \item Fr\'echet with density $f(x)=\alpha\left(x\right)^{-1-\alpha}\exp(-x^{-\alpha})$, where $\alpha,s>0$. In this case the tail index is giving by $\gamma=\frac{1}{\alpha}$.
\end{itemize}
In the simulation, $X$ and $Y$ belong to the same family of distributions, but they have different tail indices, respectively given by $\gamma_X$ and $\gamma_Y$. We keep $\gamma_X=0.5$ fixed in all simulations, where either $\gamma_Y=0.8$ or $\gamma_Y=1.5$ to investigate the behavior of the rules under different censoring schemes.   

Thus, we compare the following three threshold selection rules:
\begin{itemize}
    \item[S1:] Pick $k=0.2n$.
    \item[S2:] Pick the largest $k$ such that $\sqrt{k}||\F_{k,n}-F^\circ_{X,\hat{\gamma}_n}||_\infty< L$ for some $L>0$.
    \item[S3:] Pick the largest $k$ such that $\int_0^1 ((1-F^\circ_{X,\hat{\gamma}_n})\, \amsmathbb{Z}\circ T)^2(s)\dd F^\circ_{X,\hat{\gamma}_n}< L$ for some $L>0$.
\end{itemize}
If there does not exist a $k$ such that the quantities in S2 or S3 are below $L$ for any $k$, then we let $k=0.2n$.

To assess the quality of the selection rules, we the in a second step use the selected $k$ to estimate $\gamma_X$. For the estimator for $\gamma_X$ we use the censored Hill estimator as per Definition \ref{def:cen-Hill}.

\begin{table}[htbp!]
    \centering
    \begin{tabular}{c|ccccccc}
        $n$ &  S1 & S2 & S3 &  S2 & S3 & S2 & S3\\\hline
         & & \multicolumn{2}{|c|}{$L_1$} & \multicolumn{2}{|c}{$L_2$} & \multicolumn{2}{|c|}{$L_3$}\\\hline
        1,000 & 2.3 & 4.7 & 1.8& 6.8 & 2.6 & 9.0 & 3.7\\
        5,000 & 2.0 & 1.2 & 1.1& 1.4 & 1.2 & 1.9 & 1.4\\
        10,000 & 2.5 & 0.8 & 1.1 & 1.0 & 1.0 & 1.4 & 1.0 \\
        50,000 & 2.2 & 1.1 & 1.0& 1.2 & 0.8 & 0.7 & 0.6 
    \end{tabular}
    \caption{Burr distribution: 100$\times$ MSE based on $500$ simulations, for different selection rules. Here, $\gamma_X=0.5$, $\gamma_Y=0.8$.}
    \label{tab:1}
\end{table}

\begin{table}[htbp!]
    \centering
    \begin{tabular}{c|ccccccc}
        $n$ &  S1 & S2& S1 & S2 & S3 &  S2 & S3 \\\hline
         & & \multicolumn{2}{|c|}{$L_1$} & \multicolumn{2}{|c|}{$L_2$} & \multicolumn{2}{|c|}{$L_3$} \\\hline
        1,000 & 0.5 & 1.8 & 1.1& 1.9 & 1.4 & 2.0 & 1.7  \\
        5,000 & 0.3 &  0.8 & 0.6& 0.9 & 0.6 & 1.0 & 0.8 \\
        10,000 & 0.4 &  0.8 & 0.6 & 0.5 & 0.5 & 0.6 & 0.5\\ 
        50,000 & 0.4 &  0.4 & 0.6 & 0.5 & 0.6 & 0.3 & 0.3
        
    \end{tabular}
    \caption{Fr\'echet distribution: 100$\times$ MSE based on $500$ simulations, for different selection rules. Here, $\gamma_X=0.5$, $\gamma_Y=0.8$.}
    \label{tab:KM-frec-p066}
\end{table}

\begin{table}[htbp!]
    \centering
    \begin{tabular}{c|ccccccc}
        $n$ &  S1 & S2 & S3 &  S2 & S3 & S2 & S3\\\hline
         & & \multicolumn{2}{|c|}{$L_1$} & \multicolumn{2}{|c}{$L_2$} & \multicolumn{2}{|c|}{$L_3$}\\\hline
        1,000 & 2.1 & 4.1 & 1.5& 5.9 & 3.6 & 7.9 & 5.2\\
        5,000 & 2.1 & 1.2 & 0.8& 1.7 & 1.1 & 2.3 & 1.5 \\
        10,000 & 2.1 & 0.8 & 0.5 & 1.0 & 0.7 & 1.4 & 0.9 \\
        50,000 & 2.1 & 0.3 & 0.2& 0.3 & 0.2 & 0.4 & 0.3
    \end{tabular}
    \caption{Burr distribution: 100$\times$ MSE based on $500$ simulations, for different selection rules. Here, $\gamma_X=0.5$, $\gamma_Y=1.5$.}
    \label{tab:2}
\end{table}

\begin{table}[htbp!]
    \centering
    \begin{tabular}{c|ccccccc}
        $n$ &  S1 & S2 & S3 &  S2 & S3 & S2 & S3\\\hline
         & & \multicolumn{2}{|c|}{$L_1$} & \multicolumn{2}{|c}{$L_2$} & \multicolumn{2}{|c|}{$L_3$}\\\hline
        1,000 & 0.4 & 1.9 & 0.8& 2.4 & 1.6 & 2.7 & 2.2 \\
        5,000 & 0.3 & 0.6 & 0.4& 0.9 & 0.5 & 1.1 & 0.9 \\
        10,000 & 0.3 & 0.4 & 0.3 & 0.5 & 0.3 & 0.7 & 0.5 \\ 
        50,000 & 0.3 & 0.1 & 0.1& 0.2 & 0.1 & 0.2 & 0.2 
        
    \end{tabular}
    \caption{Fr\'echet distribution: 100$\times$ MSE based on $500$ simulations, for different selection rules. Here, $\gamma_X=0.5$, $\gamma_Y=1.5$.}
    \label{tab:KM-frec-p075}
\end{table}

Tables \ref{tab:1}, \ref{tab:KM-frec-p066}, \ref{tab:2} and \ref{tab:KM-frec-p075} consist of average mean squared error (MSE) based on different choices of $L$ and numbers of observations. For S2 we have $L_1=1.5$, $L_2=1.75$ and $L_3=2$. For S3 we have $L_1=0.25$, $L_2=0.5$ and $L_3=0.75$. The choices of $L$ are inspired by the plots in Figure \ref{fig:all}. We observe, that when the data is simulated from a Burr distribution then S2 and S3 outperform S1 for large $n$. It is only for $n=1,000$ where S1 is best. When the data is Fr\'echet the picture is less clear. For $\gamma_Y=0.8$, i.e. heavy censoring, S1 seems to slightly outperform S2 and S3. It is only when using $L_3$ and $n=50,000$ that S1 is not better. For $\gamma_Y=1.5$, i.e. more light censoring, we see that for $n=50,000$ S2 and S3 are preferred for all values $L$. However, S1 is preferred for smaller $n$. Overall, we can conclude that S2 and S3 seem stable across the different values of $L$, and that S2 and S3 are better than S1 for large sample sizes, especially when the data is Burr distributed, which has the largest bias in our simulation study, and the three strategies seem more equivalent when there is data is Fr\'echet-distributed, which exhibits much lower bias.

\subsection{Non-life insurance data application}

In this section, we apply the two selection strategies to the dataset \texttt{freclaimset3dam9207}, from the \texttt{R} package \texttt{CASdatasets}. It consists of 109,992 French insurance claims between 1992 and 2007, more specifically for a damage guarantee for motor insurance. Since the claims are paid yearly until settlement, we consider an unsettled (or open) observation as a right-censored observation. The dataset does not specify a settlement date, so we proxy it as the year where the yearly payment stop growing. For a more detailed analysis of the data see \cite{EKM-co}.

 A Hill plot together the the evolution of the two test statistics can be seen in Figure \ref{fig:data}. The two estimates are based on $L=0.5$ and $L=1.75$ for respectively Extreme Cram\'er--von Mises (CM) and Extreme Kolmogorov--Smirnov (KS). The estimates based on the two statistics are similar. Precisely, they are respectively given by $0.55$ and $0.59$. This is in contrast to the naive Rule-of-thumb (RoT), which estimates the tail-index at the astonishing $1.01$. In Figure \ref{fig:data-test}, we see the estimated EKM estimator and $T(s)$ based on CM and RoT. We observe that the EKM fits $T(s)$ better based on CM than on RoT. This corroborates that the estimate from CM is more accurate than the one from RoT. The figure doesn't include KS for readability sake, however, the analysis for KS compared to RoT is very similar. 
 
\begin{figure}[htbp!]
    \centering
        \includegraphics[width=0.49\textwidth]{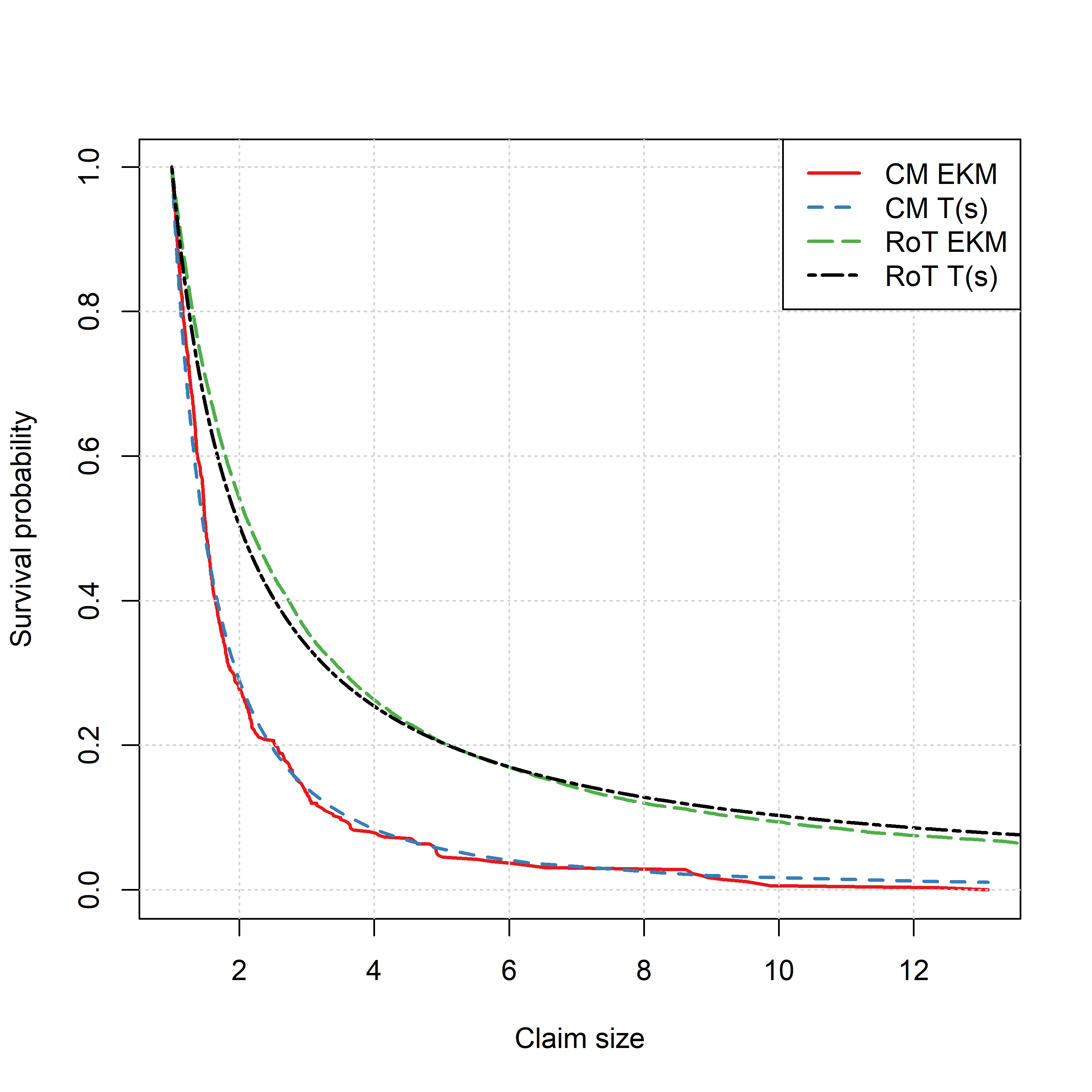}
    \caption{Comparison of the EKM estimator and $T(s)$ based on different selection rules.} 
    \label{fig:data-test}
\end{figure}

In Figure \ref{fig:data-rolling}, we provide automatically selected estimates of the tail index based on a $4$-year rolling window. Specifically, the first tail estimate is calculated using data from 1992 to 1996. The subsequent estimate is based on data from 1993 to 1997, and this pattern continues, with each estimate shifting the window forward by one year. This method allows us to observe how stable our estimates based on the different selection rules are. We can observe that there are a few occasions where the selection rule based on the statistics is the same as the RoT, since the respective test statistics do not get below the threshold $L$ and hence equals RoT. However, the CM and KS mostly  estimate the tail index to be lower than the RoT estimate. As the selection rule based on CM and KS are focused on minimizing bias, we explain the different with that the RoT estimate contain significant positive bias. We can further observe the CM seem to be more stable than KS, although this might be specific to this dataset. 
\begin{figure}[htbp!]
    \centering
        \includegraphics[width=0.49\textwidth]{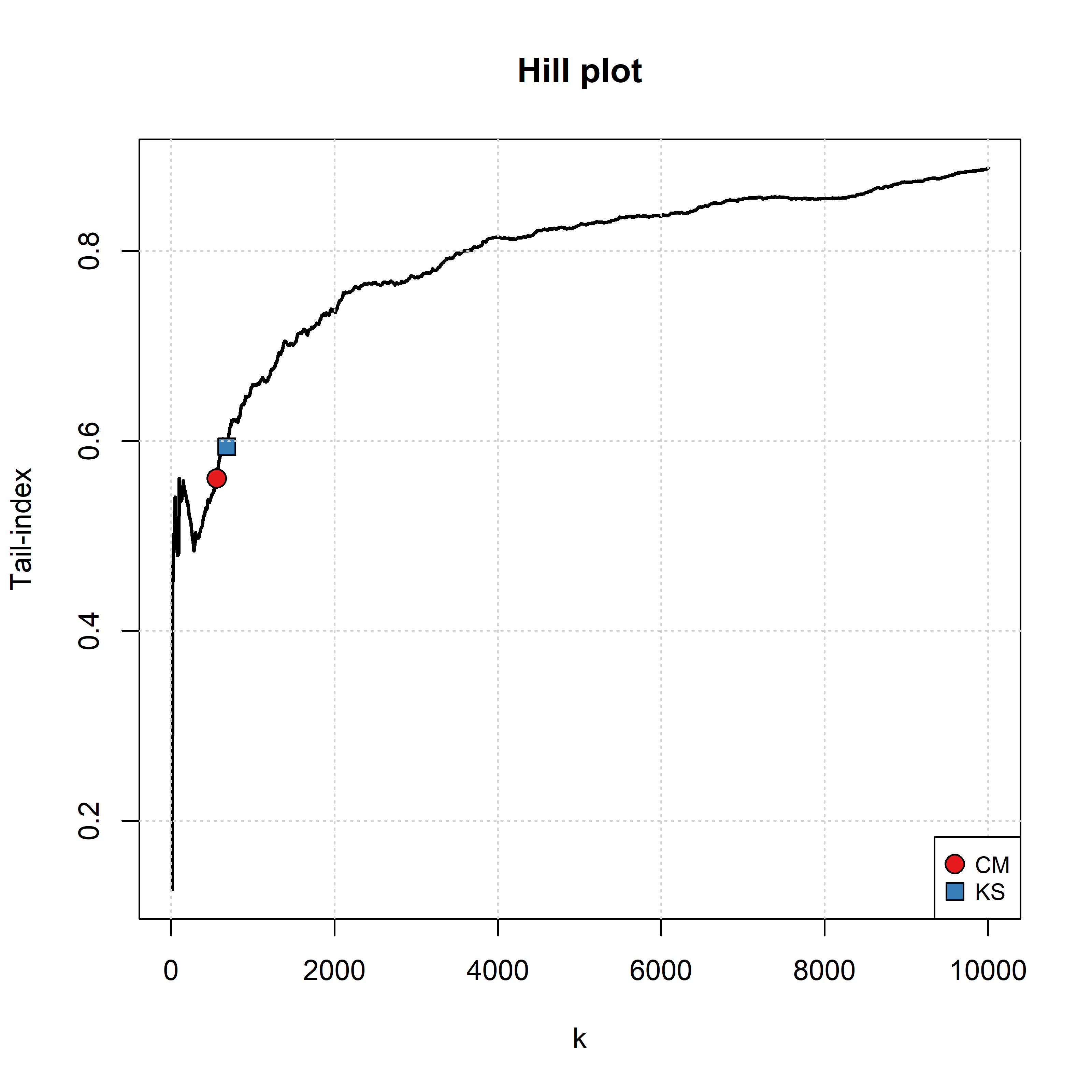}
        \includegraphics[width=0.49\textwidth]{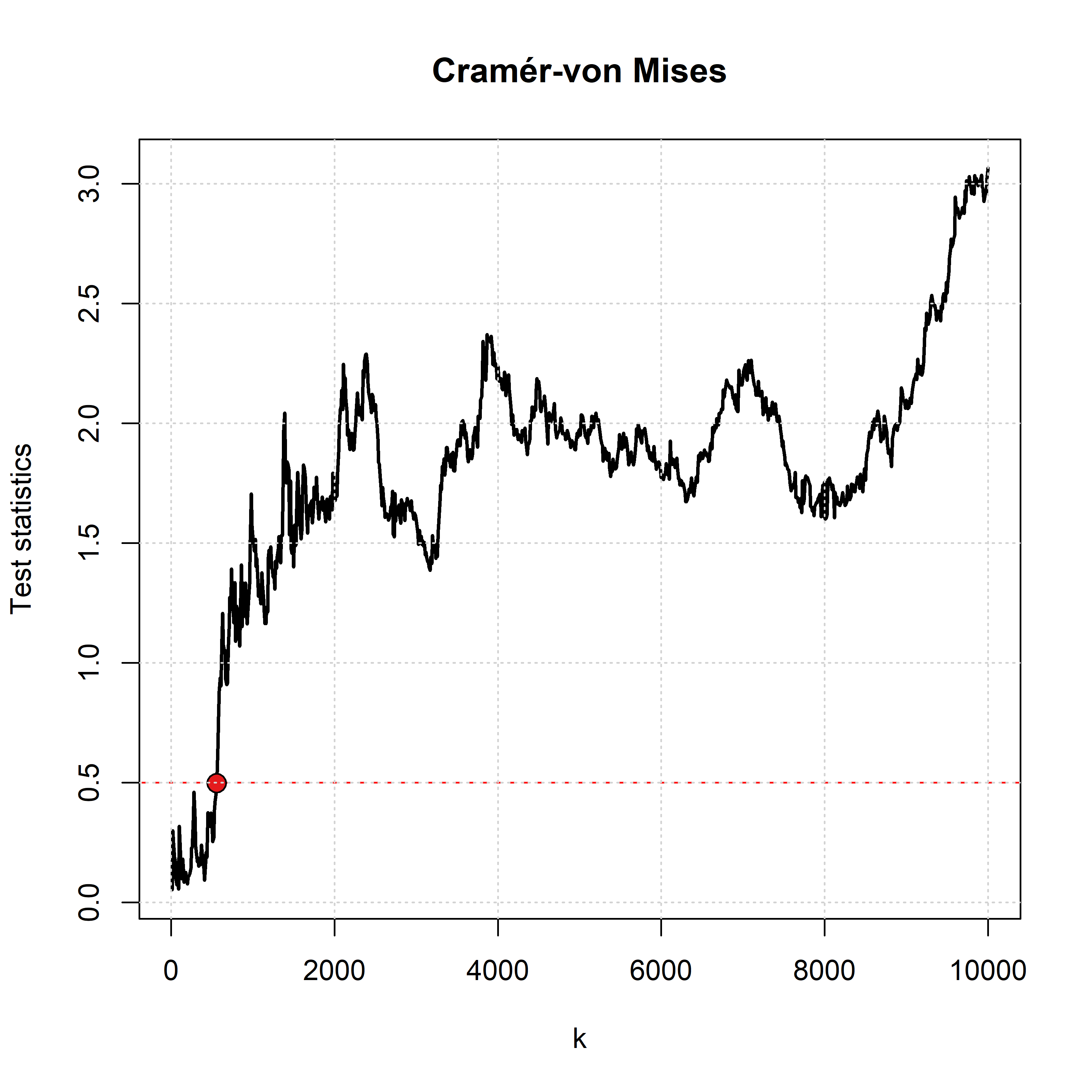}
        \includegraphics[width=0.49\textwidth]{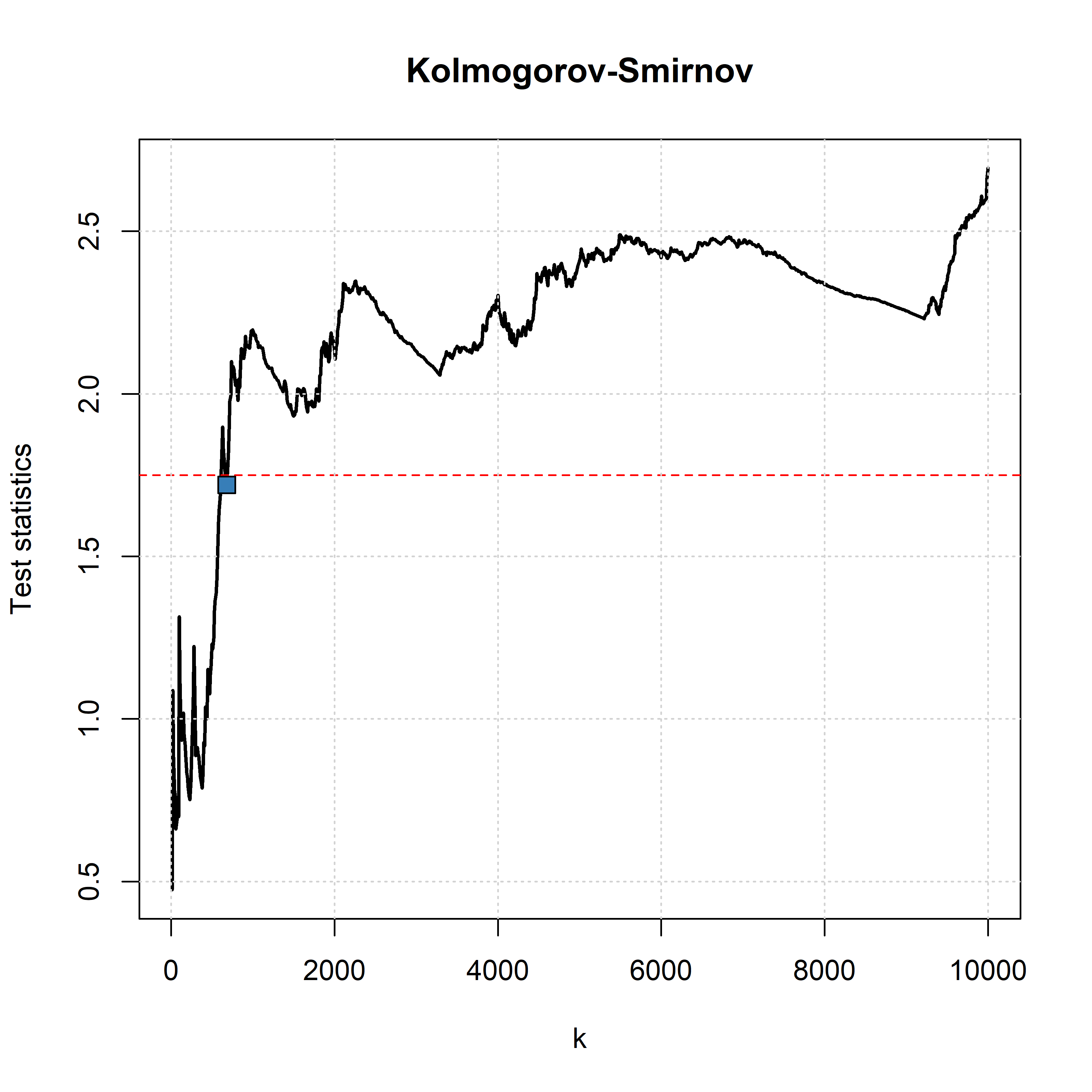}
    \caption{Censored Hill plot with automatically selected sample fractions (top panel) together with the corresponding Extreme Cram\'er--von Mises (center panel) and Extreme Kolmogorov--Smirnov (bottom panel) statistics. The red dotted lines indicate the bounds $L$.} 
    \label{fig:data}
\end{figure}

\begin{figure}[htbp!]
    \centering
        \includegraphics[width=0.45\textwidth]{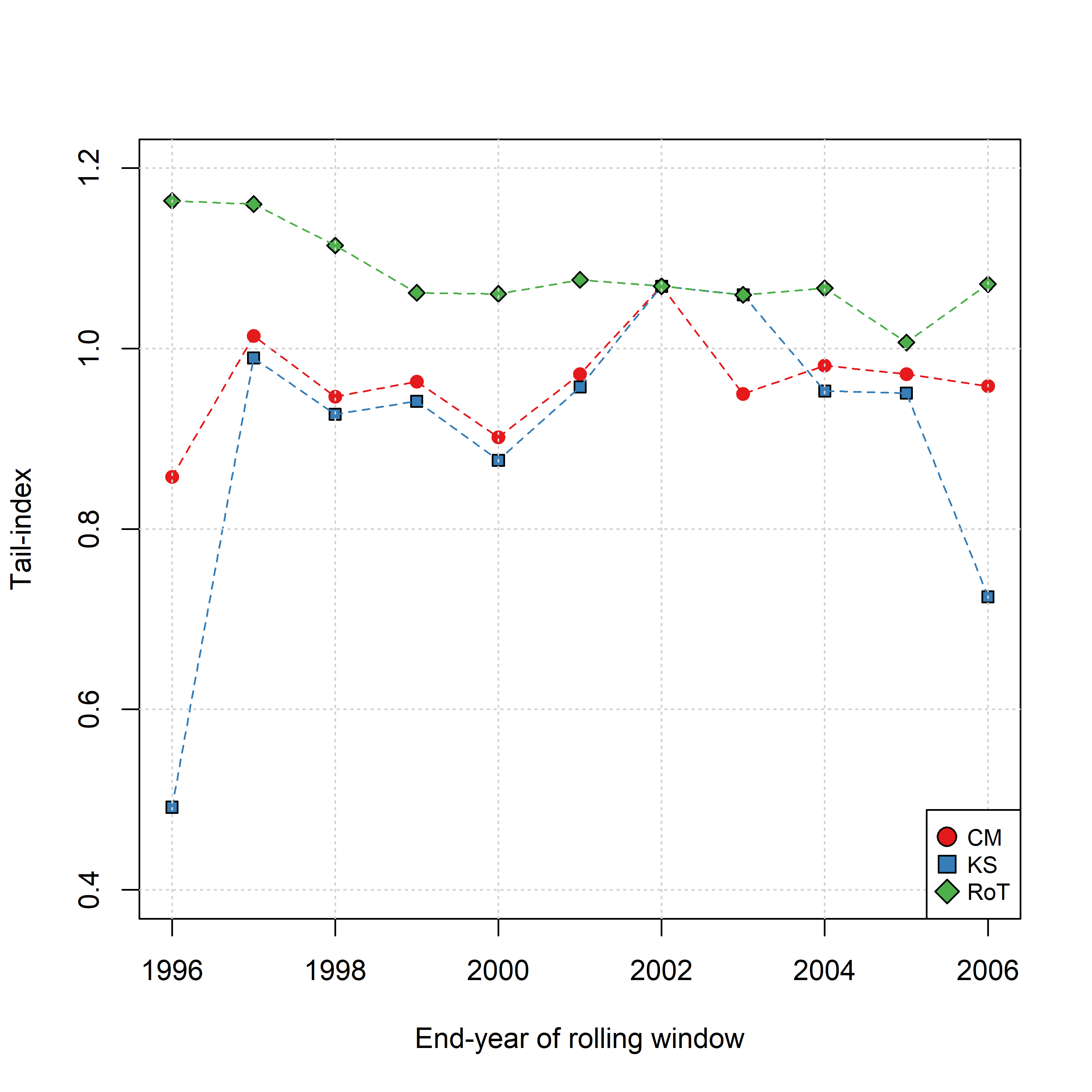}
        \includegraphics[width=0.45\textwidth]{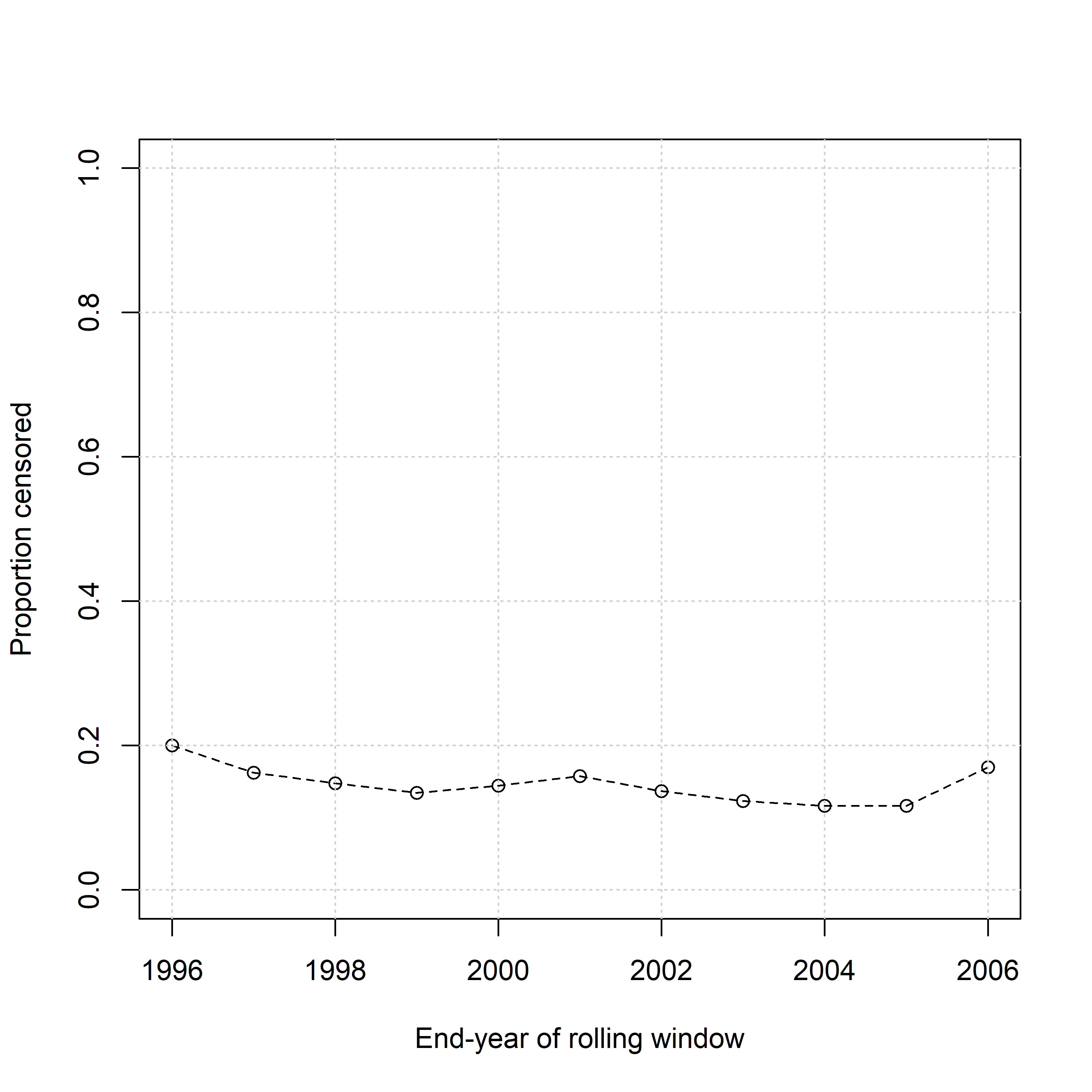}
    \caption{Automatically selected tail estimates for a rolling window based on different selection rules.} 
    \label{fig:data-rolling}
\end{figure}

We further investigate a growing window approach. We thus begin by considering claims arriving from 1992 to 1996. Subsequently, we expand this window to include claims arriving from 1992 to 1997, then 1992 to 1998, and so forth. With this method we would expect that the censoring proportion is decreasing as we include more years. This enable us to see how our estimates from the selection rules behave as the data becomes less and less censored. The estimated tail-index based the the different selection rules and the censoring proportion can be seen in Figure \ref{fig:data-growing}. Again, we observe that both of our proposed approaches suggest a lower tail index than taking the top $1/5$ of the data, indicating that there is a significant bias term in the data. Overall, the rules evolve similarly; however, the last two years, both CM and KS decrease, where RoT remains around the same level. This might indicate that there could be tempering in the far tail of the distribution, possibly related to the data generating process, but also could arise from strategic changes in the handling of claim payments. These effects may not be distinguishable from a rolling window approach, but only apparent with the much larger sample sizes available from the growing window approach. A decreasing trend could be handled by a modification of our proposed models allowing the inclusion of time as a covariate, for which Kernel methods play a central role. The latter is out of scope of the current paper and left for subsequent research.

\begin{figure}[htbp!]
    \centering
        \includegraphics[width=0.45\textwidth]{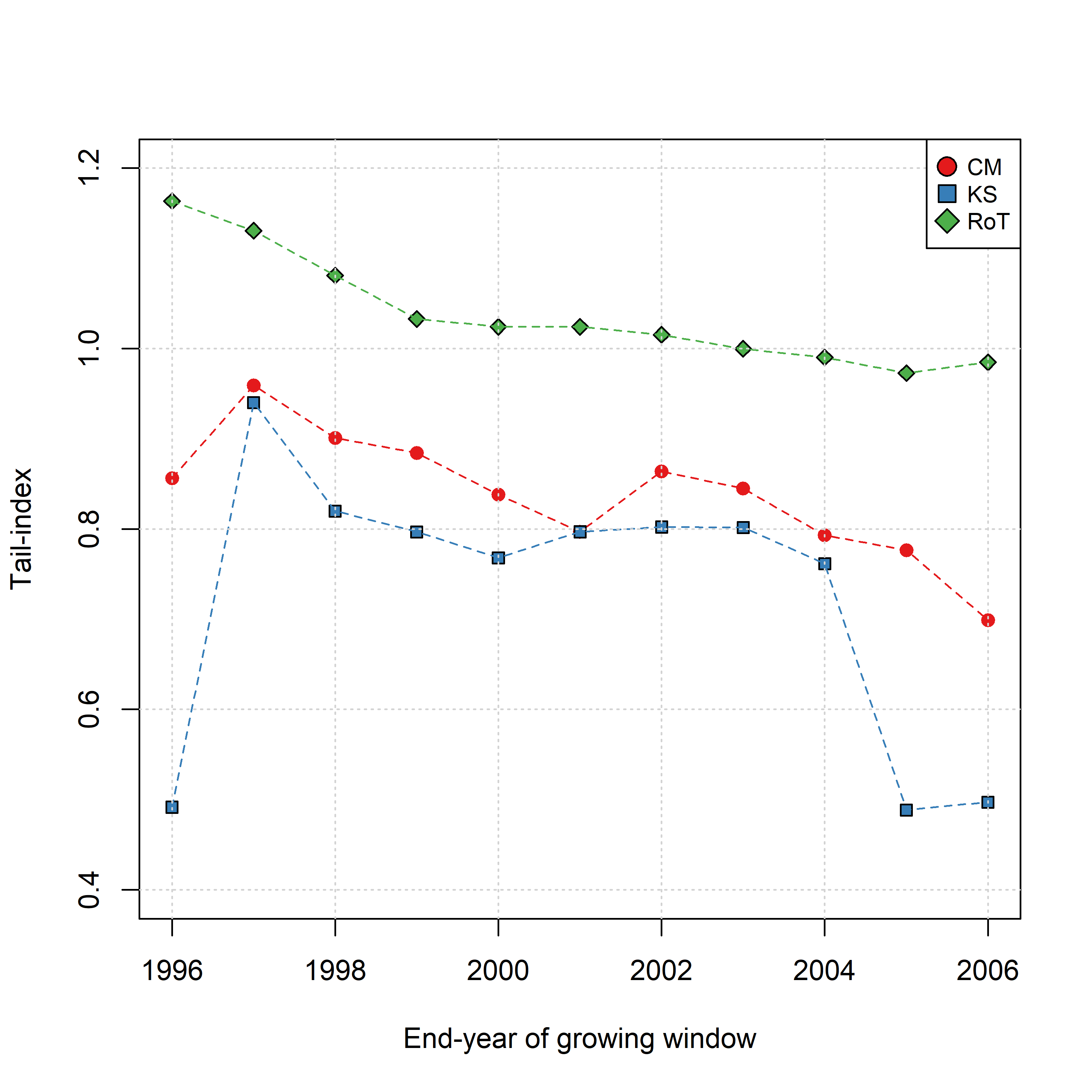}
        \includegraphics[width=0.45\textwidth]{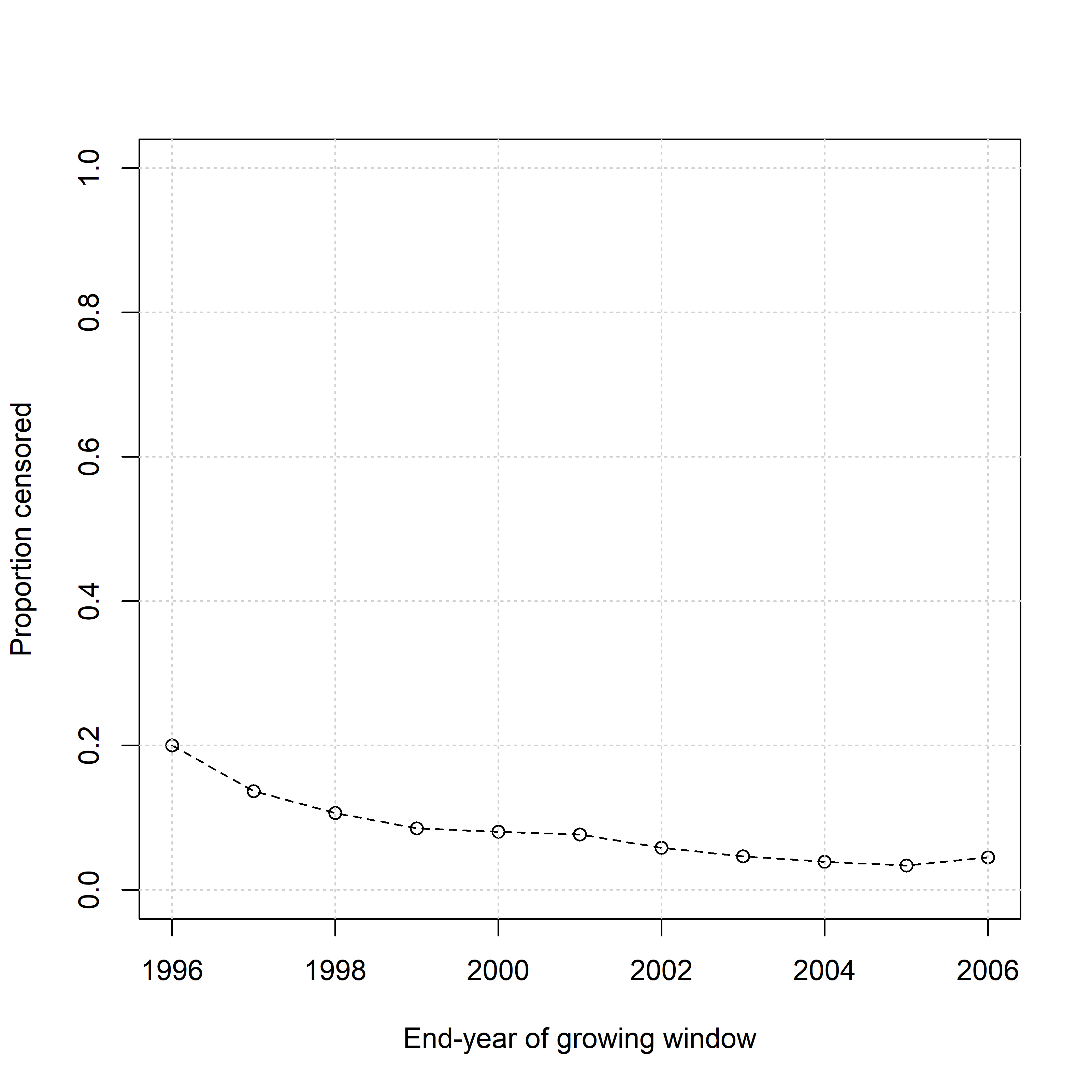}
    \caption{Automatically selected tail estimates for a growing window based on different selection rules.} 
    \label{fig:data-growing}
\end{figure}

\section{Conclusion}

In this paper, we have established pathwise convergence for the Extreme Nelson-Aalen (ENA) and Extreme Kaplan-Meier (EKM) estimators. This was achieved by first demonstrating the convergence of the tail empirical process and then applying the continuous mapping theorem or functional delta method. Using the pathwise convergence of the EKM estimator, we established the consistency and normality of the censored Hill estimator. Additionally, we employed the pathwise convergence to derive the asymptotic distributions of the Goodness-of-Fit (GoF) statistics -- the Kolmogorov--Smirnov and Cramér-von Mises statistics -- in a censored extreme setting. These statistics were subsequently used to construct data-based selection rules.

The two selection rules based on GoF testing were investigated through simulations and real data application. Our proposed selection rules demonstrated favorable performance when the degree of censoring is not severe, and for sufficiently large sample sizes.

\section*{Funding}
The work presented here is supported by the Carlsberg Foundation, grant CF23-1096.

\newpage
\bibliographystyle{plainnat}
\bibliography{GOF_extremes.bib}

\begin{thebibliography}{18}
\providecommand{\natexlab}[1]{#1}
\providecommand{\url}[1]{\texttt{#1}}
\expandafter\ifx\csname urlstyle\endcsname\relax
  \providecommand{\doi}[1]{doi: #1}\else
  \providecommand{\doi}{doi: \begingroup \urlstyle{rm}\Url}\fi

\bibitem[Beirlant et~al.(1999)Beirlant, Dierckx, Goegebeur, and
  Matthys]{Beirlantselect}
Jan Beirlant, Goedele Dierckx, Yuri Goegebeur, and Gunther Matthys.
\newblock {Tail Index Estimation and an Exponential Regression Model }.
\newblock \emph{Extremes}, 2, 1999.

\bibitem[Beirlant et~al.(2018)Beirlant, Maribe, and Verster]{BEIRLANT2018114}
Jan Beirlant, Gao Maribe, and Andrehette Verster.
\newblock Penalized bias reduction in extreme value estimation for censored
  pareto-type data, and long-tailed insurance applications.
\newblock \emph{Insurance: Mathematics and Economics}, 78:\penalty0 114--122,
  2018.
\newblock \doi{10.1016/j.insmatheco.2017.11.008}.

\bibitem[Bladt and Rodionov(2024)]{EKM}
Martin Bladt and Igor Rodionov.
\newblock {Censored extreme value estimation}.
\newblock \emph{Preprint}, pages 1 -- 56, 2024.
\newblock \doi{10.48550/arXiv.2312.10499}.

\bibitem[Bladt et~al.(2021)Bladt, Albrecher, and Beirlant]{bladt2021trimmed}
Martin Bladt, Hansj{\"o}rg Albrecher, and Jan Beirlant.
\newblock Trimmed extreme value estimators for censored heavy-tailed data.
\newblock \emph{Electronic Journal of Statistics}, 15\penalty0 (1):\penalty0
  3112--3136, 2021.
\newblock \doi{10.1214/21-EJS1857}.

\bibitem[Bladt and Øhlenschlæger(2024)]{EKM-co}
Martin Bladt and Christoffer Øhlenschlæger.
\newblock {Heterogeneous extremes in the presence of random covariates and
  censoring}.
\newblock \emph{Preprint}, pages 1 -- 53, 2024.
\newblock \doi{10.48550/arXiv.2406.06113}.

\bibitem[Dey and Yan(2016)]{ExtremeSELEC}
Dipak~K. Dey and Jun Yan.
\newblock \emph{{Extreme Value Modeling and Risk Analysis}}.
\newblock Chapman and Hall/CRC, 2016.

\bibitem[Einmahl et~al.(2008)Einmahl, Fils-Villetard, and
  Guillou]{einmahl2008statistics}
John~HJ Einmahl, Am{\'e}lie Fils-Villetard, and Armelle Guillou.
\newblock Statistics of extremes under random censoring.
\newblock \emph{Bernoulli}, 14\penalty0 (1):\penalty0 207--227, 2008.
\newblock \doi{10.3150/07-BEJ104}.

\bibitem[Fukutome et~al.(2014)Fukutome, Liniger, and Süveges]{Autoselect}
Sophie Fukutome, Mark~A. Liniger, and Maria Süveges.
\newblock Automatic threshold and run parameter selection: a climatology for
  extreme hourly precipitation in switzerland.
\newblock \emph{Theoretical and Applied Climatology}, 120:\penalty0 403--416,
  2014.

\bibitem[Gill and Johansen(1990)]{gill1990survey}
Richard~D. Gill and S{\o}ren Johansen.
\newblock A survey of product-integration with a view toward application in
  survival analysis.
\newblock \emph{The Annals of Statistics}, 18\penalty0 (4):\penalty0
  1501--1555, 1990.
\newblock \doi{10.1214/aos/1176347865}.

\bibitem[Goegebeur et~al.(2024)Goegebeur, Guillou, and
  Qin]{goegebeur2024conditional}
Yuri Goegebeur, Armelle Guillou, and Jing Qin.
\newblock Conditional tail moment and reinsurance premium estimation under
  random right censoring.
\newblock \emph{Test}, 33\penalty0 (1):\penalty0 230--250, 2024.

\bibitem[Kulik and Soulier(2020)]{kulik2020heavy}
Rafal Kulik and Philippe Soulier.
\newblock \emph{{Heavy-Tailed Time Series}}.
\newblock Springer, 2020.
\newblock \doi{10.1007/978-1-0716-0737-4}.

\bibitem[Matthys et~al.(2004)Matthys, Delafosse, Guillou, and
  Beirlant]{matthys2004estimating}
Gunther Matthys, Emmanuel Delafosse, Armelle Guillou, and Jan Beirlant.
\newblock Estimating catastrophic quantile levels for heavy-tailed
  distributions.
\newblock \emph{Insurance: Mathematics and Economics}, 34\penalty0
  (3):\penalty0 517--537, 2004.

\bibitem[Mauro(1985)]{Mauro}
David Mauro.
\newblock {A Combinatoric Approach to the Kaplan-Meier Estimator}.
\newblock \emph{The Annals of Statistics}, 13\penalty0 (1):\penalty0 142 --
  149, 1985.
\newblock \doi{10.1214/aos/1176346582}.

\bibitem[Reynkens et~al.(2017)Reynkens, Verbelen, Beirlant, and
  Antonio]{reynkens2017modelling}
Tom Reynkens, Roel Verbelen, Jan Beirlant, and Katrien Antonio.
\newblock Modelling censored losses using splicing: A global fit strategy with
  mixed erlang and extreme value distributions.
\newblock \emph{Insurance: Mathematics and Economics}, 77:\penalty0 65--77,
  2017.

\bibitem[Schneider et~al.(2021)Schneider, Krajina, and
  Krivobokova]{Thresholdselec}
Laura~Fee Schneider, Andrea Krajina, and Tatyana Krivobokova.
\newblock Threshold selection in univariate extreme value analysis.
\newblock \emph{Extremes}, 24:\penalty0 881--913, 2021.

\bibitem[Stupfler(2016)]{stupfler2016estimating}
Gilles Stupfler.
\newblock Estimating the conditional extreme-value index under random
  right-censoring.
\newblock \emph{Journal of Multivariate Analysis}, 144:\penalty0 1--24, 2016.
\newblock \doi{10.1016/j.jmva.2015.10.015}.

\bibitem[Stute(1995)]{stute}
Winfried Stute.
\newblock {The Central Limit Theorem Under Random Censorship}.
\newblock \emph{The Annals of Statistics}, 23\penalty0 (2):\penalty0 422 --
  439, 1995.
\newblock \doi{10.1214/aos/1176324528}.

\bibitem[Van~der Vaart(2000)]{vdv2000}
Aad~W. Van~der Vaart.
\newblock \emph{Asymptotic statistics}, volume~3.
\newblock Cambridge university press, 2000.

\end{thebibliography}

\end{document}